\documentclass[10pt, twosides]{IEEEtran}

\ifCLASSINFOpdf

\else

\fi

\usepackage{mathrsfs}
\usepackage{amsmath}
\usepackage{amsfonts}
\usepackage{amssymb}
\usepackage{amstext}
\usepackage{graphicx}
\usepackage{indentfirst}
\usepackage{array}
\usepackage{cite}
\usepackage{enumerate}
\usepackage{stmaryrd}
\usepackage[linesnumbered,ruled,vlined]{algorithm2e}
\usepackage{multirow}
\usepackage{multicol}
\usepackage{verbatim}
\usepackage{stfloats}
\usepackage{color}
\usepackage{bm}
\usepackage{multirow}
\usepackage{multicol}

\newtheorem{theorem}{\bf{Theorem}}
\newtheorem{lemma}{\bf{Lemma}}

\newtheorem{corollary}{\bf{Corollary}}

\newtheorem{definition}{\bf{Definition}}
\newtheorem{remark}{\bf{Remark}}
\newtheorem{example}{\bf{Example}}
\newtheorem{metric}{\bf{Metric}}

\newcommand{\tabincell}[2]{\begin{tabular}{@{}#1@{}}#2\end{tabular}}
\begin{document}
%
\title{Polar Coded Diversity on Block Fading Channels via Polar Spectrum}

\author{Kai Niu, \IEEEmembership{Member,~IEEE}, Yan Li, \IEEEmembership{Student Member,~IEEE}
\thanks
{
This work is supported by the National Key R\&D Program of China (No. 2018YFE0205501), the National Natural Science Foundation of China (No. 61671080). \protect\\
\indent K. Niu, Y. Li, are with the Key Laboratory of Universal Wireless Communications, Ministry of Education,
Beijing University of Posts and Telecommunications, Beijing, 100876, China (e-mail: \{niukai, Lyan\}@bupt.edu.cn). \protect\\
}}

\maketitle

\begin{abstract}
Due to the advantage of capacity-achieving, polar codes have been extended to the block fading channel whereas most constructions involve complex iterative-calculation. In this paper, we establish a systematic framework to analyze the error performance of polar codes in the case of block mapping and random mapping. For both the mappings, by introducing the new concept, named split polar spectrum, we derive the upper bound on the error probability of polarized channel which explicitly reveals the relationship between the diversity order $L$ and the block-wise weight distribution of the codeword. For the special case $L=2$ in the block mapping, we design the enumeration algorithm to calculate the exact split polar spectrum based on the general MacWilliams identities. For arbitrary diversity order in the random mapping, with the help of uniform interleaving, we derive the approximate split polar spectrum by combining the polar spectrum and the probability of fading pattern for a specific weight. Furthermore, we propose the design criteria to construct polar codes over the block fading channel. The full diversity criterion is the primary target so as to achieve the diversity gain and the product distance criterion requires to maximize the product of the block-wise Hamming distance whereby obtain the coding gain. Guided by these design criteria, the construction metric, named polarized diversity weight (PDW) is proposed to design the polar codes in both mappings. Such a simple metric can construct polar codes with similar or better performance over those based on traditional methods in block fading channel.
\end{abstract}

\begin{IEEEkeywords}
Polar codes, Polar spectrum, Split polar spectrum, Block Rayleigh fading channel, Diversity order.
\end{IEEEkeywords}

\IEEEpeerreviewmaketitle

\section{Introduction}
\label{section_I}
\subsection{Relative Research}
\IEEEPARstart{A}{s} a great breakthrough in channel coding theory, polar codes, invented by Ar{\i}kan \cite{Polarcode_Arikan}, provide a constructive method to achieve the capacity of the symmetric channels. The design idea behind the polar codes is the channel polarization. Given a group of $N$ identical binary-input discrete memoryless channels (B-DMC), after the operations of channel combining and splitting, the generated channels demonstrate the polarization phenomenon whereby the capacity of some channels tends to one (good channels) and that of the others tends to zero (bad channels). Thus, the polar coding can be described that the good channels are assigned as the information bits and the bad channels are set to the fixed bits, namely frozen bits. When the polar codes are concatenated by cyclic redundancy check (CRC) code and CRC aided successive cancellation list (SCL) or successive cancellation stack (SCS) decoding (CA-SCL/SCS) are used \cite{CASCL_Niu}, they demonstrate advantages in error performance over the turbo or LDPC codes, especially for the short/medium code length. Due to the merit of outstanding error performance, the CRC-polar code was selected as the coding standard of the control channel in the fifth generation (5G) wireless communication system \cite{5GNR_38212}.

Some efficient construction algorithms of polar codes have been proposed for typical B-DMCs, such as binary erasure channel (BEC), binary symmetric channel (BSC) or binary-input additive white Gaussian noise (BI-AWGN). Originally, Bhattacharyya parameter \cite{Polarcode_Arikan} is recursively calculated to evaluate the reliability of the polarized channel under BEC. Tal and Vardy proposed an upgrading and degrading algorithm \cite{Tal_Vardy} to bound the error probability of the polarized channel under the general B-DMCs. Gaussian approximation (GA) \cite{GA_Trifonov} was used to estimate the error probability in AWGN channel with a low complexity. Generally, the common characteristics of these algorithms are the iterative calculation dependent on some parameters of the original channel (e.g. crossover probability or signal-to-noise ratio (SNR)).

Recently the construction of polar codes in the fast or block fading channels attracted wide attentions. For the fast Rayleigh fading channel, Trifonov \cite{FastFading_Trifonov} first presented an iterative algorithm to calculate and track the diversity order and noise variance of the polarized channels. Based on the same channel model, Liu and Ling \cite{FastFading_Liu} designed the polar codes and polar lattices by using the Tal-Vardy algorithm \cite{Tal_Vardy} to evaluate the upper/lower bound of the Bhattacharyya parameter of the polarized channels. Lately, Zhou \emph{et al.} \cite{FastFading_Zhou} designed a capacity-equivalent BI-AWGN channel of the Rayleigh channel and constructed the polar codes by using the GA algorithm \cite{GA_Trifonov}. All these construction algorithms in fast fading channel belong to the channel-dependent construction and cannot be straightforwardly extended to the polar coding in block fading channel due to the assumption failure of the channel ergodicity.

On the other hand, for the block fading channel, Boutros and Biglieri \cite{BlockFading_Boutros} first investigated the polarization behaviour whereby full diversity and outage probability are evaluated by using the analysis of mutual information. However, the conclusions in \cite{BlockFading_Boutros} are theoretical and only suitable for a block fading channel with two distinct fading values. Almost simultaneously, for the block Rayleigh fading channel with known channel side information (CSI) or channel distribution information (CDI), Bravo-Santos \cite{BlockFading_Bravo-Santos} proposed a recursive calculation of Bhattacharyya parameter to evaluate the reliability of the polarized channel, whereas this method needs a time-consumption Monte-Carlo computation. Subsequently, Si \emph{et al.} \cite{BlockFading_Si} modeled a fading AWGN channel with BPSK modulation as a fading BSC channel and designed a two-stage polar coding over channel uses and fading blocks. They proved that polar codes can achieve the capacity of fading BSC channel. However, the construction of polar codes still depends on the recursive calculation of Bhattacharyya parameter. Recently, Liu \emph{et al.} \cite{BlockFading_Liu} introduced a natural polarization to indicate distinct reliability of the fading block and optimize the permutation pattern between the code bits and the fading blocks. Although such method may obtain performance gain, it depends on the ideal assumption that the fading envelope is available at the encoder. We conclude that the parameter and state of the block fading channel implicitly affect the error probability of the polarized channel. However, these construction algorithms in block fading channel depend on the iterative calculation of some parameters and lack the systematic framework to establish the explicit relationship between the structure of polar codes and the diversity order, which is significant for the practical application.

On the contrary, the explicit and channel-independent construction is more desirable for the implementation of polar coding in the practical system. He \emph{et al.} \cite{PW_He} designed a channel-independent construction, named polarized weight (PW) algorithm, whereby constructing polar codes with almost the same performance as those constructed by GA algorithm. In addition, a fixed construction sequence for all the code configuration is used in 5G standard \cite{5GNR_38212} which is obtained by computer searching \cite{5Gpolar_Bioglio}. Especially, in the previous work \cite{PolarSpectrum_Niu}, we explored the weight distribution of polar codes and introduced a new concept, named polar spectrum to establish a systematic framework to analyze the error probability of polar codes. Furthermore, two explicit metrics were designed to construct polar codes in terms of polar spectrum. In a word, these explicit construction algorithms can generate polar codes with similar or better error performance than those constructed by conventional channel-dependent construction.

\subsection{Motivation}
In this paper, we focus on the polar coding over the block fading channel whereby the fading envelope is unavailable at the encoder. Many works have investigated the theoretical association between the block fading behavior and the coding techniques, such as \cite{CodedDiversity,On_coding,RandomMapping,ReliableRegion} and references therein. The design criteria of coding over the block fading channel should reveal explicitly the relationship between the error performance and the code block diversity, which are significantly different from the standard design criteria for polar coding over AWGN channel or fast fading channel. However, due to the iterative calculation of Bhattacharyya parameter or mutual information, most current construction works over block fading channel implicitly indicate such a relationship. Furthermore, due to the channel-dependent property, these works lack the merit of explicit construction. Therefore, the aim of this paper is to establish a theoretic framework to analyze the behavior of polar codes in block fading channel and derive some analytical and explicit constructions with low complexity.

\subsection{Main Contributions}
In this paper, by introducing a new concept, named split polar spectrum, we establish a systematic framework to design the polar codes in the block fading channel. For two types of typical mappings from the code bits to the fading blocks, namely block mapping and random mapping, we derive the explicit construction metrics. The main contributions of this paper can be summarized as follows.

\begin{enumerate}[1)]
  \item First, the design criteria based on the (split) polar spectrum for the polar coding in block Rayleigh fading channel are established. Two typical mappings are investigated in this paper, that is, block mapping (the codeword is equally partitioned and mapped into the fading blocks) and random mapping (the code bits are firstly interleaved and then mapped into the fading blocks). For both mappings, by analyzing the upper bound on the error probability of the polarized channel, we build two design criteria, that is, full diversity criterion and product distance criterion. As the primary criterion, the former requires the nonzero code bits are distributed over all the fading blocks so as to achieve full diversity order. Furthermore, the latter requires the product of block-wise Hamming distance between the all-zero codeword and a nonzero codeword is maximized so as to obtain the coding advantage. Both the criteria are dominated by the split weight distribution of polar codes, which indicates the number of codeword-partition blocks with the same distribution of block-wise Hamming weight. In the block mapping, the split polar spectrum can be exactly evaluated. Whereas, in the random mapping, the split polar spectrum is approximately estimated by using the polar spectrum with the help of uniform interleaving.

  \item Second, in block mapping, we derive the upper bound on the error probability of the polarized channel composed of the split polar spectrum and the average pairwise error probability (PEP). After averaging and bounding over the distribution of the fading block envelopes, we find that the PEP is dominated by the product distance. On the other hand, the split polar spectrum determines the number of PEP events. The upper bound of the block error rate (BLER) under the SC decoding for block Rayleigh fading channel is also provided. Furthermore, we analyze the (approximate) upper bound on the error probability for the case of low SNR and derive the logarithmic version of the upper bound as the construction metric, named polarized diversity weight (PDW), which is a simple analytical-metric rather than complex iterative-calculation. The enumeration of the split polar spectrum is a time-consumption process for an arbitrary diversity order $L$. Therefore, we only design an efficient enumeration algorithm for the diversity order $L=2$ based on the general MacWilliams identities. 

  \item Finally, in random mapping, we analyze the corresponding upper bound and build the similar construction metric. Thanks to the uniform interleaving, we obtain the approximation of split polar spectrum by combing the polar spectrum and the probability of a fading block pattern for a specific codeword weight. Similarly, we derive the upper bound on the error probability of the polarized channel depending on the approximation of split polar spectrum. Then, we use the logarithmic version of the (approximate) upper bound as the construction metrics (PDW).

      In both mappings, given the pre-calculated (split) polar spectrum, compared with those constructions based on the iterative calculation, these metrics can construct polar codes with a linear complexity. Furthermore, considering the practical application, they can also be transformed to the channel-independent constructions. Simulation results show that these metrics can generate polar codes to achieve the same or better performance as those constructed by the previous methods.

\end{enumerate}

The remainder of the paper is organized as follows. Section \ref{section_II} presents the preliminaries of polar codes, including polar coding, decoding and polar spectrum. For the block mapping, Section \ref{section_III} describes the performance analysis framework of polar codes in block fading channel. By using split polar spectrum, the upper bound on the error probability of the polarized channel and the upper bound on BLER of polar codes for the block Rayleigh fading channel are derived and analyzed. In Section \ref{section_IV}, we design the construction metric and provide the enumeration algorithm of split polar spectrum for the special diversity order $L=2$. On the other hand, for the random mapping, we derive the approximation of split polar spectrum and analyze the corresponding upper bound in Section \ref{section_V}. Then the construction metric is derived and analyzed in Section \ref{section_VI}. Numerical analysis for the upper bounds in term of (approximate) split polar spectrum and simulation results for comparing PDW construction with the traditional construction are presented in Section \ref{section_VII}. Finally, Section \ref{section_VIII} concludes the paper.

\section{Preliminary of Polar Codes and Polar Spectrum}
\label{section_II}
\subsection{Notation Conventions}
In this paper, calligraphy letters, such as $\mathcal{X}$ and $\mathcal{Y}$, are mainly used to denote sets, and the cardinality of $\mathcal{X}$ is defined as $\left|\mathcal{X}\right|$. The Cartesian product of $\mathcal{X}$ and $\mathcal{Y}$ is written as $\mathcal{X}\times \mathcal{Y}$ and $\mathcal{X}^n$ denotes the $n$-th Cartesian power of $\mathcal{X}$. Let $\mathcal{R}$ denote the real-number set. Especially, the hollow symbol, e.g. $\mathbb{C}$, denotes the codeword set of code or subcode. Let $\llbracket a, b \rrbracket$ denote the continuous integer set $\{a,a+1,\cdots,b\}$.

We write $v_1^N$ to denote an $N$-dimensional vector $\left(v_1,v_2,\cdots,v_N\right)$ and $v_i^j$ to denote a subvector $\left(v_i,v_{i+1},\cdots,v_{j-1},v_j\right)$ of $v_1^N$, $1\leq i,j \leq N$. Simultaneously, we use the boldface lowercase letter, e.g. $\mathbf{u}$, to denote a vector. Further, given an index set $\mathcal{A}\subseteq \llbracket 1,N \rrbracket$ and its complement set $\mathcal{A}^c$, we write $v_\mathcal{A}$ and $v_{\mathcal{A}^c}$ to denote two complementary subvectors of $v_1^N$, which consist of $v_i$s with $i\in\mathcal{A}$ or $i\in\mathcal{A}^c$ respectively.

We use $d_H(\mathbf{u},\mathbf{v})$ to denote the Hamming distance between the binary vector $\mathbf{u}$ and $\mathbf{v}$. Given $\forall {\mathbf{a}}, {\mathbf{b}}\in \mathcal{R}^N$, let $\left\|\bf{a}-\bf{b}\right\|$ denote the Euclidian distance between the vector $\bf{a}$ and $\bf{b}$. We use the boldface capital letter, such as $\mathbf{F}_N$, to denote a matrix with dimension $N$. So the notation $\mathbf{F}_N(a:N)$ indicates the submatrix consisting the rows from $a$ to $N$ of the matrix $\mathbf{F}_N$.

Throughout this paper, $\log\left(\cdot\right)$ means ``logarithm to base $2$'' and $\ln\left(\cdot\right)$ stands for the natural logarithm. Let $(\cdot)^T$ denote the transpose operation of the vector. Let $\mathbb{E}(Z)$ denote the expectation of the random variable $Z$ respectively. Furthermore, ``$\otimes$" is the Kronecker product and ``$\oplus$" denotes the module-$2$ operation.

\subsection{Encoding and Decoding of Polar Codes}
A B-DMC is modeled as $W: \mathcal{X}\to \mathcal{Y}$ with input alphabet $\mathcal{X}= \{0,1\}$ and output alphabet $\mathcal{Y}$ and the corresponding channel transition probabilities are defined as $W(y|x)$, $x\in \mathcal{X}$ and $y\in \mathcal{Y}$. Ar{\i}kan introduced the idea of channel polarization \cite{Polarcode_Arikan} whereby $N=2^n$ independent uses of B-DMC $W$ can be transformed into a group of polarized channels $W_N^{(i)}: \mathcal{X}\to \mathcal{Y} \times \mathcal{X}^{i-1}$, $i\in \llbracket 1,N \rrbracket$.

In order to construct an $(N,K)$ polar code, the set of polarized channels with high-reliability, denoted $\mathcal{A}$ and named information set ($|\mathcal{A}|=K$), is selected to carry information bits and the rest polarized channels belonging to the complement set $\mathcal{A}^c$ are assigned the fixed values, named frozen bits ($|\mathcal{A}^c|=N-K$). So the codeword $x_1^N$ of polar codes can be generated by a binary source block $u_1^N$ consisting of $K$ information bits and $N-K$ frozen bits as follows,
\begin{equation}\label{equation1}
x_1^N=u_1^N{\bf{F}}_N.
\end{equation}
Here, the $N$-dimension generator matrix\footnote{The generator matrix in the initial form is composed of the matrix ${\bf{F}}_N$ and the bit-reversal matrix \cite{Polarcode_Arikan}. In fact, the bit-reversal operation does not affect the reliability of the polarized channel. Hence, in this paper, we only use the matrix $\mathbf{F}_N$ as the equivalent generator matrix which is also the polar coding form in 5G standard \cite{5GNR_38212}.}, ${{\bf{F}}_N}$, can be recursively defined as ${{\bf{F}}_N} = {\bf{F}}_2^{ \otimes n}$, where ``$^{\otimes n}$'' denotes the $n$-th Kronecker product and ${{\bf{F}}_2} = \left[ { \begin{smallmatrix} 1 & 0 \\ 1 &  1 \end{smallmatrix} } \right]$ is the $2\times2$ kernel matrix.

The classic decoding algorithm of polar codes is the SC decoding algorithm \cite{Polarcode_Arikan} with a low complexity $O(N\log N)$. However, the error performance of SC decoding for the finite code length is not satisfied. Thereafter, many improved SC decoding algorithms, such as successive cancellation list (SCL) \cite{SCL_Tal}, successive cancellation stack (SCS) \cite{SCS_Niu}, successive cancellation hybrid (SCH) \cite{SCH_Chen}, successive cancellation priority (SCP) \cite{SCH_Guan}, and CRC aided (CA)-SCL/SCS \cite{SCL_Tal,CASCL_Niu,ASCL_Li,Survey_Niu} decoding were proposed to improve the performance of polar codes with an acceptable complexity.

\subsection{Polar Spectrum}
In the seminal paper \cite{Polarcode_Arikan}, the reliability of each polarized channel is evaluated by the Bhattacharyya parameter whereas this metric is only approximate for many B-DMCs. Then, many improved construction methods were proposed to estimate the error performance of polar codes, such as, density evolution (DE) \cite{DE_Mori}, Tal-Vardy algorithm \cite{Tal_Vardy} and Gaussian approximation (GA) \cite{GA_Trifonov}. Nevertheless, all these construction methods based on iterative calculation cannot explicitly reveal the characteristic of weight distribution of polar codes.

In \cite{PolarSpectrum_Niu}, we found that the reliability of each polarized channel $W_N^{(i)}$ is associated with two sets of specific codewords, named subcode and polar subcode respectively. They are defined as follows.
\begin{definition}\label{definition1}
Given the code length $N$, the $i$-th subcode $\mathbb{C}_N^{(i)}$ is defined as a set of codewords, that is,
\begin{equation}
\mathbb{C}_N^{(i)}\triangleq \left\{{\bf{c}}:{\bf{c}}=\left(0_1^{(i-1)},u_i^N\right){\bf{F}}_N,\forall u_i^N \in \mathcal{X}^{N-i+1} \right\}.
\end{equation}
Furthermore, one subset of the subcode $\mathbb{C}_N^{(i)}$, namely the polar subcode $\mathbb{D}_N^{(i)}$, can be defined as
\begin{equation}
\mathbb{D}_N^{(i)}\triangleq \left\{{\bf{c}}^{(1)}:{\bf{c}}^{(1)}=\left(0_1^{(i-1)},1,u_{i+1}^N\right){\bf{F}}_N,\forall u_{i+1}^N \in \mathcal{X}^{N-i} \right\}.
\end{equation}
\end{definition}

We introduced a new concept, named polar spectrum, to indicate the structure feature of the polar subcode. The polar spectrum is defined as the weight distribution set $\left\{A_N^{(i)}(d)\right\},d\in \llbracket 1, N \rrbracket $, where $d$ is the Hamming weight of nonzero codeword and the polar weight enumerator $A_N^{(i)}(d)$ enumerates the codewords of weight $d$ in the codebook $\mathbb{D}_N^{(i)}$.

Therefore, by using the polar spectrum, the error probability of the polarized channel $W_N^{(i)}$ is upper bounded by
\begin{equation}\label{equation2}
P\left(W_N^{(i)}\right)\leq \sum \limits_{d=1}^N A_N^{(i)}(d) P_N^{(i)}(d),
\end{equation}
where $P_N^{(i)}(d)$ is the pairwise error probability (PEP) between the all-zero codeword and the codeword with the Hamming weight $d$.

The previous evaluation methods of the error probability of the polarized channel, such as Bhattacharyya parameter, DE, Tal-Vardy or GA algorithm, cannot direct reveal the relation between the error performance and weight distribution of polar subcode. On the contrary, due to the analytical form, the union bound of the error probability (\ref{equation2}) bridges the gap between the error performance of polar codes and the structure feature of polar subcode. So the construction methods based on polar spectrum can present more constructive merits than the traditional methods.

\section{Performance Analysis based on Split Polar Spectrum}
\label{section_III}
In this section, the error performance of polar codes in the block fading channel is analyzed by using split polar spectrum. First, the signal model of the block fading channel is investigated. Then we derive the upper bound on the error probability of the polarized channel under the block Rayleigh fading channels and provide the design criteria of polar codes. Finally, the corresponding BLER upper bounds of the polar codes is also deduced.

\subsection{Signal Model of Block Fading Channel}
The communication system of polar coded diversity is depicted in Fig. \ref{PCD_system}. The source block $\mathbf{u}=u_1^N=\left(u_1,u_2,...,u_N\right)$ with $K$ information bits is encoded into a codeword $\mathbf{x}=x_1^N=\left(x_1,x_2,...,x_N\right)$. After a mapping operation, the bit sequence of the codeword is transformed into a transmitted signal vector $\mathbf{s}$ and further divided into $L$ equal-length blocks, that is, $\mathbf{s}=\left(\mathbf{s}_1,\mathbf{s}_2,...,\mathbf{s}_L\right)$.
So the code length satisfies $N=LM$, that is, the signal vector is composed of $L$ blocks and each block has $M$ elements.

\begin{figure}[h]\label{Figure1}
\setlength{\abovecaptionskip}{0.cm}
\setlength{\belowcaptionskip}{-0.cm}
  \centering{\includegraphics[scale=0.9]{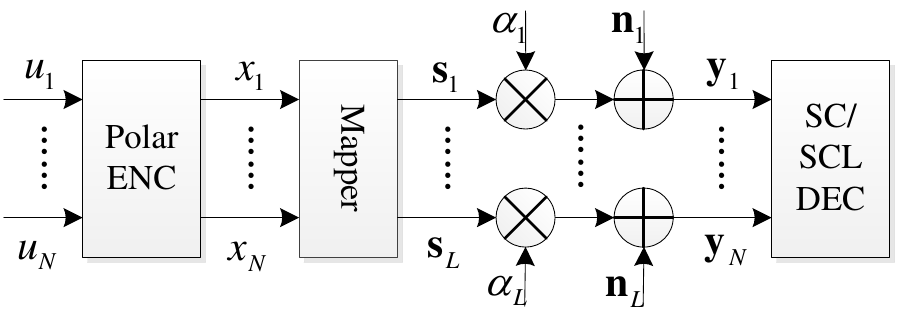}}
  \caption{Polar coded diversity communication system.}\label{PCD_system}
\end{figure}

When the signal vector $\mathbf{s}$ is transmitted over the block fading channel, each element $s_{l,k}$ (${l\in \llbracket 1,L \rrbracket, k\in \llbracket 1,M \rrbracket}$) in block $\mathbf{s}_l$ undergoes an independent block-fading $\alpha_l$ and is added a white Gaussian noise sample $n_{l,k}$. Finally, the received vector $\mathbf{y}=\left(\mathbf{y}_1,\mathbf{y}_2,...,\mathbf{y}_L\right)$ is sent to the polar decoder and decoded by the classic algorithms, such as, SC or SCL decoding.
So we can write the signal model as follows
\begin{equation}\label{equation3}
\mathbf{y}_l=\alpha_l \mathbf{s}_l+\mathbf{n}_l,l\in \llbracket 1,L \rrbracket
\end{equation}
where $\mathbf{y}_l=\left(y_{l,1},y_{l,2},...,y_{l,M}\right)$ is the $l$-th received block, $\mathbf{s}_l=\left(s_{l,1},s_{l,2},...,s_{l,M}\right)$ is the $l$-th transmitted block, and the additive noise vector $\mathbf{n}_l$ has zero-mean complex Gaussian elements each with variance $\sigma_n^2=N_0/2$. The element in the transmitted signal vector $\mathbf{s}$ is a BPSK signal, that is, $s_{l,k}=\pm\sqrt{E_s}$, where $E_s$ is the symbol energy. Further, we define $\frac{E_s}{N_0}$ as the symbol signal-to-noise ratio (SNR) and $N_0$ is the single-sided power spectral density of the additive white noise.

Especially, the amplitude gain vector $\bm{\alpha}=\left(\alpha_1,\alpha_2,...,\alpha_L\right)$ is composed of fading coefficients $\alpha_l$ which are constant during the transmission of a block of $M$ bits and independent from block to block. In this paper, we mainly consider the Rayleigh fading with the probability density function (pdf) of $\alpha_l$ given by \cite{Book_Proakis}
\begin{equation}\label {equation4}
f_{\alpha_l}(\alpha)=\frac{\alpha}{\sigma^2}\exp\left[-\frac{\alpha^2}{2\sigma^2}\right],
\end{equation}
where the average power of the fading envelope $\alpha_l$ obeys $\mathbb{E}(\alpha_l^2)=2\sigma^2=1$.

In Fig. \ref{PCD_system}, the mapper $\psi_L$ defines a memoryless mapping from the codebook to $L$-ary Cartesian product of real spaces $\mathcal{R}^M$
\begin{equation}
\psi_L: \mathcal{X}^N\to\left(\mathcal{R}^M\right)^L.
\end{equation}
This mapping includes interleaving, modulation and splitting into $L$ blocks. We focus on two mapping operations in this paper: \textbf{block mapping} and \textbf{random mapping}. For the first mapping, no interleaving is applied and the coded bits are directly mapped into BPSK modulated signals, that is, $s_{l,k}=\sqrt{E_s}\left(1-2x_{l,k}\right)$. The analysis and construction of polar codes under this block mapping are discussed in Section \ref{section_III} and Section \ref{section_IV}. On the other hand, the second mapping includes a uniform interleaver $\Pi(\cdot)$ and the signal can be written by $s_{l,k}=\sqrt{E_s}\left(1-2x_{\Pi(s,t)}\right)$. The corresponding analysis and construction of polar codes will be addressed in Section \ref{section_V} and Section \ref{section_VI}. Furthermore, we use $\gamma_l=\mathbb{E}(\alpha_l^2)\frac{E_s}{N_0}$ to denote the SNR of received block $l$. Without loss of generality, the average power of fading envelope can be normalized to one. So we define the average SNR as $\gamma=\mathbb{E}(\alpha^2)\frac{E_s}{N_0}=\frac{E_s}{N_0}$.

In this paper, we assume the perfect channel state information (CSI) is known by the receiver. Thus, given the fading coefficient $\alpha_l$, the block Rayleigh fading channel is modeled as $W_{M|\alpha_l}: \mathcal{X}^M\to \mathcal{Y}^M$ with input alphabet $\mathcal{X}^M= \{0,1\}^M$ and output alphabet $\mathcal{Y}^M\in \mathcal{R}^M$. Then, the conditional channel transition probabilities can be written as
\begin{equation}
W_M\left(\mathbf{y}_l\left|\mathbf{x}_l,\alpha_l\right.\right)=\frac{1}{\left(\pi N_0\right)^{M/2}}\prod\limits_{m=1}^{M}\exp\left\{-\frac{\left(y_{l,m}-\alpha_l s_{l,m}\right)^2}{N_0}\right\}.
\end{equation}
After the polarization transform, given the channel envelope vector $\bm{\alpha}$, the conditional transition probabilities of the synthetic channel $W_N$ satisfy
\begin{equation}\label{equation5}
W_N\left(\mathbf{y}\left|\mathbf{x},\bm{\alpha}\right.\right)=\frac{1}{\left(\pi N_0\right)^{N/2}}\prod\limits_{l=1}^{L}\prod\limits_{m=1}^{M}\exp\left\{-\frac{\left(y_{l,m}-\alpha_l s_{l,m}\right)^2}{N_0}\right\}.
\end{equation}
Correspondingly, we use $W_N^{(i)}(\bm{\alpha})=W_N^{(i)}\left(y_1^N,u_1^{i-1}\left|u_i,\bm{\alpha}\right.\right)$ to denote the conditional transition probabilities of the polarized channel under the block fading channel.

\subsection{Block Error Rate Bound Based on Split Polar Spectrum}
The error performance of polar codes in the block Rayleigh fading channel is dominated by the split polar spectrum and the PEP. First, we derive the upper bound of PEP as the following theorem.
\begin{theorem}\label{theorem1}
Assuming the transmission vector is ${\bf{c}}^{(0)}=0_1^N$ and the decision vector is ${\bf{c}}^{(1)}$, if the $L$-partition block mapping is used, the block-wise Hamming distance vector between these two vectors is written as $\mathbf{d}=\left(d_1,d_2,...,d_L\right)$ where $d_l=d_H\left({\bf{c}}_l^{(0)},{\bf{c}}_l^{(1)}\right),l\in\llbracket 1,L\rrbracket$. So the PEP between ${\bf{c}}^{(0)}$ and ${\bf{c}}^{(1)}$ on the block Rayleigh fading channel can be bounded by
\begin{equation}\label{equation8}
P\left({\bf{c}}^{(0)}\to{\bf{c}}^{(1)}\right)\leq \prod\limits_{l=1}^{L} \frac{1}{1+d_l {E_s}/{N_0}}.
\end{equation}
\end{theorem}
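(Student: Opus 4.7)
The plan is to reduce the PEP analysis to a product of single-block Rayleigh MGF computations via the Chernoff bound. First I would write the conditional PEP under the assumption of perfect CSI at the receiver and ML decoding. Since both codewords are BPSK-modulated with equal symbol energy, the conditional PEP has the standard form
\begin{equation}
P\!\left({\bf c}^{(0)}\to{\bf c}^{(1)}\mid\bm{\alpha}\right)=Q\!\left(\sqrt{\frac{\|\bm{\alpha}\odot({\bf s}^{(0)}-{\bf s}^{(1)})\|^2}{2N_0}}\right),
\end{equation}
where $\odot$ denotes block-wise scaling of the signal vector by the corresponding $\alpha_l$. This is just the usual ML pairwise decision rule for Gaussian noise with known fading gains.

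Next I would evaluate the squared Euclidean distance inside the $Q$-function by exploiting the block structure induced by $L$-partition block mapping. Within block $l$, the two signal sequences differ in exactly $d_l$ coordinates, and at each such coordinate the BPSK signals differ by $2\sqrt{E_s}$, so each mismatched coordinate contributes $\alpha_l^2(2\sqrt{E_s})^2=4E_s\alpha_l^2$. Summing over blocks gives $\|\bm{\alpha}\odot({\bf s}^{(0)}-{\bf s}^{(1)})\|^2=4E_s\sum_{l=1}^{L}d_l\alpha_l^2$, and substituting into the conditional PEP yields $Q\!\left(\sqrt{2(E_s/N_0)\sum_l d_l\alpha_l^2}\right)$.

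The third step is to apply the Chernoff bound $Q(x)\leq \exp(-x^2/2)$ so that the conditional PEP factorizes across blocks:
\begin{equation}
P\!\left({\bf c}^{(0)}\to{\bf c}^{(1)}\mid\bm{\alpha}\right)\leq\prod_{l=1}^{L}\exp\!\left(-\frac{E_s}{N_0}\,d_l\,\alpha_l^2\right).
\end{equation}
Because the fading coefficients $\alpha_l$ are independent across blocks, averaging the right-hand side over the joint distribution of $\bm{\alpha}$ reduces to a product of one-dimensional expectations. Under the Rayleigh model (\ref{equation4}) with $\mathbb{E}(\alpha_l^2)=1$, the random variable $\alpha_l^2$ is exponentially distributed with unit mean, so its moment generating function gives $\mathbb{E}\!\left[\exp(-t\alpha_l^2)\right]=1/(1+t)$ for $t\geq 0$. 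Setting $t=d_l E_s/N_0$ and multiplying across $l$ produces precisely the claimed bound (\ref{equation8}).

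There is no real obstacle here; the argument is the standard diversity-order calculation and each step is routine. The only bookkeeping point that requires care is tracking the numerical factors: the $4E_s$ from the BPSK squared distance, the $1/(2N_0)$ inside the $Q$-function, and the $x^2/2$ from the Chernoff bound must combine to yield exactly $-d_l(E_s/N_0)\alpha_l^2$ in the exponent, so that the subsequent averaging produces $1/(1+d_l E_s/N_0)$ rather than a constant multiple thereof. Once the constants align, independence of $\{\alpha_l\}$ across blocks immediately gives the product structure that captures the diversity order $L$.
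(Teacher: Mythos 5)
Your proposal is correct and follows essentially the same route as the paper's proof: conditional PEP as a $Q$-function under ML decoding with known CSI, the identity $\|\mathbf{s}_l^{(0)}-\mathbf{s}_l^{(1)}\|^2=4E_s d_l$, the Chernoff bound $Q(x)\leq e^{-x^2/2}$ to factorize across blocks, and block-wise averaging over the exponentially distributed $\alpha_l^2$ (equivalently, the MGF evaluated at $d_l E_s/N_0$) to obtain $\prod_l (1+d_l E_s/N_0)^{-1}$. The constants align exactly as you describe, so no gap.
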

\begin{proof}
Given the block mapping, the codeword ${\bf{c}}^{(0)}$ and ${\bf{c}}^{(1)}$ are mapped into the transmission vector ${\bf{s}}^{(0)}$ and the signal vector ${\bf{s}}^{(1)}$ respectively. Corresponding to the block partition of the codewords, the vectors $\mathbf{s}^{(0)}$ and $\mathbf{s}^{(1)}$ are also divided into $L$ blocks, that is, $\mathbf{s}^{(j)}=\left(\mathbf{s}_1^{(j)},\mathbf{s}_2^{(j)},...,\mathbf{s}_L^{(j)}\right), j=0,1$. For the block fading channel, we assume the channel envelope vector $\bm{\alpha}$ is available for the decoder. If a pairwise error occurs, that means the Euclidian distance between the received vector $\mathbf{y}$ and the transmitted vector $\mathbf{s}^{(0)}$ is lager than that between the received vector $\mathbf{y}$ and the signal vector $\mathbf{s}^{(1)}$. So we can obtain the inequality $\sum_{l=1}^{L} {\left\| {\mathbf{y}_l}- {\alpha_l}\mathbf{s}_l^{(0)} \right\|^2} >\sum_{l=1}^{L} \left\| \mathbf{y}_l-\alpha_l\mathbf{s}_l^{(1)} \right\|^2$.

Substituting (\ref{equation4}) into this inequality, we have $\!\sum_{l=1}^{L}{\left\| \mathbf{n}_l\right\|^2}\!>\!\sum_{l=1}^{L}\!\left\| \alpha_l \left(\mathbf{s}_l^{(0)}-\mathbf{s}_l^{(1)}\right)+{\mathbf{n}_l} \right\|^2\!$. After some manipulations, the decision region of the pairwise error event can be written as $\!\mathcal{H}=\left\{\mathbf{n}:\sum\limits_{l=1}^{L}{\mathbf{n}_l} \alpha_l \left(\mathbf{s}_l^{(0)}-\mathbf{s}_l^{(1)}\right)^T\!<\!- \frac{1}{2}\sum\limits_{l=1}^{L}\!\left\| \alpha_l \left(\mathbf{s}_l^{(0)}- \mathbf{s}_l^{(1)}\right)\right\|^2\right\}\!$. So given the channel envelope vector $\bm{\alpha}$, the conditional PEP can be given by
\begin{equation}\label{equation9}
\begin{aligned}
  P\left({\mathbf{c}}^{(0)}\to{\mathbf{c}}^{(1)}\left|\bm{\alpha}\right.\right)&= {\int { \cdots \int  } }_{\mathcal{H}} {W_N}\left( \mathbf{y}\left| {\mathbf{s}}^{(0)},\bm{\alpha} \right. \right)d{\mathbf{y}} \\
   &= \prod_{l=1}^{L} Q\left[ \sqrt {\frac{\alpha_l^2}{2N_0}\left\| \mathbf{s}_l^{(0)}-\mathbf{s}_l^{(1)} \right\|^2} \right] \\
   &\leq \prod_{l=1}^{L} \exp\left( -\alpha_l^2 d_l\frac{E_s}{N_0} \right).
\end{aligned}
\end{equation}
In the second equality of (\ref{equation9}), $Q(x)=\frac{1}{\sqrt{2\pi}}\int_x^{\infty} e^{-t^2/2}dt$ is the tail distribution function of the standard normal distribution.
Furthermore, considering the relation between the Euclidian distance and the Hamming distance, we have $\left\|\mathbf{s}_l^{(0)}-\mathbf{s}_l^{(1)}\right\|^2=4 E_s \left\| \mathbf{c}_l^{(0)}-\mathbf{c}_l^{(1)}\right\|^2=4 E_s d_l$. Then by using the inequality $Q(x)\leq e^{-\frac{x^2}{2}}$, we obtain the third inequality of (\ref{equation9}).

Since $\alpha_l$ is the Rayleigh-distributed random variable, we conclude that $\alpha_l^2$ in Eq. (\ref{equation9}) is a central chi-square-distributed random variable with $2$ degrees of freedom \cite{Book_Proakis}. Hence the pdf of the fading block SNR $\gamma_l=\alpha_l^2 E_s/N_0$ is a negative exponential distribution as the following
\begin{equation}
g\left(\gamma_l\right)=\frac{1}{\gamma}e^{-\gamma_l/\gamma}.
\end{equation}

For the block Rayleigh fading channel, taking the expectation over the fading block SNRs, the average PEP can be derived as
\begin{equation}\label{equation11}
\begin{aligned}
\mathbb{E}_{\bm\alpha}\left[P\left({\mathbf{c}}^{(0)}\to{\mathbf{c}}^{(1)}\left|\bm{\alpha}\right.\right)\right]&=\prod_{l=1}^{L}\int\nolimits_0^{\infty}e^{-\gamma_l d_l }g(\gamma_l)d\gamma_l \\
&=\prod_{l=1}^{L} \frac{1}{1+\gamma d_l}.
\end{aligned}
\end{equation}
\end{proof}

Considering the derivation of Theorem \ref{theorem1}, it follows that PEP $P\left({\bf{c}}^{(0)}\to {\bf{c}}^{(1)}\right)$ is determined by the weight distribution vector $\bf{d}$ of polar codeword. So we introduce the following concept to indicate the number of codewords with the given weight distribution vector.
\begin{definition}\label{definition1}
The split polar spectrum of the polar subcode $\mathbb{D}_N^{(i)}$, also termed as split polar weight distribution, is defined as the weight distribution set $\left\{A_N^{(i)}\left(d_1,d_2,...,d_L\right)\right\},d_l\in \llbracket 1, M \rrbracket, l\in\llbracket 1,L\rrbracket $, where $\sum_{l=1}^{L}{d_l}=d$ is the Hamming weight of nonzero codeword. The split polar weight enumerator $A_N^{(i)}(\bf{d})$ enumerates the number of the given weight partition vector ${\bf{d}}=\left(d_1,d_2,...,d_L\right)$ for codebook $\mathbb{D}_N^{(i)}$.
\end{definition}

Using the average PEP and split polar spectrum, we can derive the union bound of the average error probability of the polarized channel $W_N^{(i)}$ by the following corollary.
\begin{corollary}\label{corollary1}
Given the polar subcode ${\mathbb{D}}_N^{(i)}$, the error probability of the polarized channel $W_N^{(i)}$ can be upper bounded as
\begin{equation}\label{equation12}
P\left(W_N^{(i)}\right)\leq \sum \limits_{\sum_{l=1}^{L}{d_l}=d_{min}^{(i)}}^N A_N^{(i)}({\mathbf{d}}) \prod\limits_{l=1}^{L} \frac{1}{d_l {E_s}/{N_0}+1},
\end{equation}
where $d_{min}^{(i)}$ denotes the minimum Hamming distance of polar subcode ${\mathbb{D}}_N^{(i)}$.
\end{corollary}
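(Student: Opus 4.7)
The plan is to derive the claimed bound as a refinement of the basic union bound in Eq. (\ref{equation2}), where the refinement is forced by the block-structured form of the PEP in Theorem \ref{theorem1}.

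First, I would recall why the union bound in Eq. (\ref{equation2}) holds for the polarized channel $W_N^{(i)}$. Under SC decoding, given that $u_1^{i-1}$ has already been (correctly) estimated as $0_1^{i-1}$ and that the transmitted bit $u_i=0$, the decision on $u_i$ errs only if some competing codeword starting with $(0_1^{i-1},1)$ — that is, some element of the polar subcode $\mathbb{D}_N^{(i)}$ — is more likely than the all-zero codeword. A union bound over $\mathbb{D}_N^{(i)}\setminus\{\mathbf{c}^{(0)}\}$ then gives
\begin{equation}
P\left(W_N^{(i)}\right)\leq \sum_{\mathbf{c}^{(1)}\in\mathbb{D}_N^{(i)}}\mathbb{E}_{\bm\alpha}\!\left[P\!\left(\mathbf{c}^{(0)}\to\mathbf{c}^{(1)}\,\big|\,\bm\alpha\right)\right].
\end{equation}

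Second, I would partition this sum according to the block-wise Hamming distance profile rather than the total Hamming weight. For each codeword $\mathbf{c}^{(1)}\in\mathbb{D}_N^{(i)}$, the quantity governing the averaged PEP in Theorem \ref{theorem1} is the vector $\mathbf{d}=(d_1,\ldots,d_L)$ with $d_l=d_H(\mathbf{c}_l^{(0)},\mathbf{c}_l^{(1)})$, not merely the total weight $d=\sum_l d_l$. By Definition \ref{definition1}, the number of codewords in $\mathbb{D}_N^{(i)}$ realizing a given profile $\mathbf{d}$ is exactly $A_N^{(i)}(\mathbf{d})$. Grouping identical PEP terms and substituting the bound from Theorem \ref{theorem1} yields
\begin{equation}
P\left(W_N^{(i)}\right)\leq \sum_{\mathbf{d}} A_N^{(i)}(\mathbf{d})\prod_{l=1}^{L}\frac{1}{1+d_l E_s/N_0},
\end{equation}
where the sum ranges over all admissible profiles with $0\le d_l\le M$.

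Third, I would restrict the range of summation. Since $A_N^{(i)}(\mathbf{d})=0$ whenever the total weight $\sum_l d_l$ is smaller than the minimum Hamming distance $d_{min}^{(i)}$ of $\mathbb{D}_N^{(i)}$, the lower limit of the sum may be tightened to $\sum_{l=1}^L d_l = d_{min}^{(i)}$, and the upper limit extended (vacuously) to $N$. This is a purely bookkeeping step and gives exactly the form stated in Eq. (\ref{equation12}).

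I do not expect any substantive obstacle: the core analytic content has already been supplied by Theorem \ref{theorem1}, and the remaining work is the re-indexing of the union bound by $\mathbf{d}$ instead of by $d$, together with invoking Definition \ref{definition1} to identify multiplicities. The only subtlety worth flagging is that the expectation over $\bm\alpha$ and the union bound commute (both are applied to nonnegative quantities), which justifies interchanging the sum over codewords with the averaging over fading realizations before applying Theorem \ref{theorem1} block-wise.
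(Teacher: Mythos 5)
Your proposal is correct and follows essentially the same route as the paper: the paper presents Corollary \ref{corollary1} as an immediate consequence of combining the union bound of Eq. (\ref{equation2}) with the averaged PEP of Theorem \ref{theorem1}, re-indexed by the block-wise weight profile $\mathbf{d}$ whose multiplicity is the split polar weight enumerator $A_N^{(i)}(\mathbf{d})$. Your explicit handling of the grouping by profile and of the interchange of the codeword sum with the fading average is exactly the bookkeeping the paper leaves implicit.
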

Note that this upper bound is the union bound of the error probability and the summation should be taken over all the weight partitions of the codeword weight.

By using Eq. (\ref{equation2}), we have the following corollary to evaluate the BLER upper bound in the block Rayleigh fading channel.
\begin{corollary}\label{corollary2}
Given the fixed configuration $(N,K,\mathcal{A})$, the block error probability of polar code using SC decoding in the block Rayleigh fading channel is upper bounded by
\begin{equation}\label{equation13}
P_{e}(N,K,\mathcal{A})\leq \sum\limits_{i\in \mathcal{A}} \sum \limits_{\sum_{l=1}^{L}{d_l}=d_{min}^{(i)}}^N A_N^{(i)}({\mathbf{d}}) \prod\limits_{l=1}^{L} \frac{1}{d_l {E_s}/{N_0}+1}.
\end{equation}
\end{corollary}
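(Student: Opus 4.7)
The plan is to derive the bound from Corollary \ref{corollary1} via the standard union-bound decomposition of the SC decoding error event across the information-bit channels.

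First I would recall Ar\i kan's characterization of the SC decoding error event. Under SC decoding with the configuration $(N,K,\mathcal{A})$, a block error occurs if and only if at least one information bit $u_i$, $i\in\mathcal{A}$, is decoded incorrectly, conditioned on all previously decoded bits being correct (the so-called genie-aided analysis). Writing $\mathcal{E}_i$ for the event that the decoder for the polarized channel $W_N^{(i)}$ makes an error on $u_i$ given correct past decisions, the block error event satisfies $\mathcal{E}\subseteq \bigcup_{i\in\mathcal{A}}\mathcal{E}_i$, and each $\mathcal{E}_i$ is exactly the error event of the polarized channel $W_N^{(i)}$ whose probability is bounded in Corollary \ref{corollary1}. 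I would note that this decomposition is channel-agnostic: it relies only on the recursive structure of the SC decoder, so it is valid on the block Rayleigh fading channel with the conditional transition law in (\ref{equation5}).

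Next I would apply the union bound over $i\in\mathcal{A}$,
\begin{equation*}
P_e(N,K,\mathcal{A})=\Pr\left(\mathcal{E}\right)\leq \sum_{i\in\mathcal{A}}\Pr\left(\mathcal{E}_i\right)=\sum_{i\in\mathcal{A}}P\left(W_N^{(i)}\right),
\end{equation*}
and then substitute the per-channel bound
\begin{equation*}
P\left(W_N^{(i)}\right)\leq \sum_{\sum_{l=1}^{L}d_l=d_{min}^{(i)}}^{N} A_N^{(i)}(\mathbf{d})\prod_{l=1}^{L}\frac{1}{d_l E_s/N_0+1}
\end{equation*}
from Corollary \ref{corollary1}. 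Combining the two inequalities yields the claimed expression directly.

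There is essentially no hard step here because the heavy lifting (averaging the pairwise error probability over the Rayleigh fading envelopes and summing against the split polar spectrum) has already been done in Theorem \ref{theorem1} and Corollary \ref{corollary1}. The only subtlety worth spelling out is justifying that the fading randomness does not spoil the SC error decomposition: since the receiver is assumed to know the CSI $\bm{\alpha}$ and SC is performed with the conditional law $W_N^{(i)}(\bm{\alpha})$, the decomposition $\mathcal{E}\subseteq\bigcup_i\mathcal{E}_i$ holds conditionally on $\bm{\alpha}$, and hence, after taking expectation over $\bm{\alpha}$, it transfers to the average error probabilities used in Corollary \ref{corollary1}. With this observation the corollary follows immediately.
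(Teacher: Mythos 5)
Your proof is correct and follows essentially the same route the paper implies: the genie-aided union bound over the information set $\mathcal{A}$ combined with the per-channel bound of Corollary \ref{corollary1} (the paper simply states the corollary as following from the union bound in (\ref{equation2}) without writing out these steps). Your added remark that the SC error-event decomposition holds conditionally on $\bm{\alpha}$ and survives the expectation over the fading is a worthwhile clarification, but it does not change the argument.
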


Compared with the previous works of polar coding in block fading channel \cite{BlockFading_Boutros,BlockFading_Bravo-Santos,BlockFading_Si,BlockFading_Liu}, the upper bounds in (\ref{equation12}) and (\ref{equation13}) provide an analytical form for the error probability of polar codes in the block Rayleigh fading channel. These bounds are mainly determined by the spit polar spectrum of the selected polar subcodes, that is to say, the weight partition vectors dominate the average PEP and the split polar weight enumerators indicate the number of weight partitions. Therefore, for the construction of polar codes in block fading channel, these upper bounds are more attractive than the traditional methods. Next, we will further discuss the design criteria of polar codes on the block Rayleigh fading channels.

\subsection{Design Criteria on the Block Fading Channel}\label{DesignCriteria}
By Theorem \ref{theorem1}, we investigate the upper bound of PEP in the case of high SNR. In this case, suppose $\forall l, d_l\neq0$ and $E_s/N_0\gg 1$, the PEP can be further bounded by
\begin{equation}\label{equation14}
\begin{aligned}
P\left({\bf{c}}^{(0)}\to{\bf{c}}^{(1)}\right)&\leq \prod\limits_{l=1}^{L} \frac{1}{d_l {E_s}/{N_0}}
                                                            = \left(\prod\limits_{l=1}^{L}{d_l}\right)^{-1} \left(\frac{E_s}{N_0}\right)^{-L}.
\end{aligned}
\end{equation}

Obviously, considering the power of SNR in the denominator of Eq. (\ref{equation14}), we observe that the full diversity $L$ can be achieved if the block Hamming distance $d_l$ is none zero. Furthermore, we find that $\prod_{l=1}^{L}{d_l}$, named product distance, is a coding advantage over an uncoded system operating with the same diversity order. Hence, given the fixed codeword weight $d$, in order to minimize the PEP, we should maximize the product distance, which is expressed as the following optimization problem,
\begin{equation}\label{equation15}
\begin{aligned}
&\max_{\mathbf{d}=\left(d_1,d_2,...,d_L\right)}\prod_{l=1}^{L}{d_l}\\
s.t. &\forall {l}, \text{   } d_l > 0, \sum_{l=1}^{L}{d_l}=d.
\end{aligned}
\end{equation}
Theoretically, due to the integer constraint of Hamming distance, this integer-programming problem is only solved by a brute-force searching. However, if the integer constraint is relaxed, we can obtain the following lemma for the approximate optimal solution.
\begin{lemma}\label{lemma1}
The optimal solution of (\ref{equation15}) can be achieved when the codeword weight is uniformly partitioned, that is, $\forall l, d_l=d/L$.
\end{lemma}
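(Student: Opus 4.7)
The plan is to treat (\ref{equation15}) as a continuous constrained optimization after relaxing the integer constraint, since the lemma explicitly concerns the relaxed version. Two standard routes are available, and I would present the AM--GM route because it is the shortest and most transparent.

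First, I would apply the arithmetic--geometric mean inequality to the tuple $(d_1,\ldots,d_L)$, which gives
\begin{equation*}
\left(\prod_{l=1}^{L} d_l\right)^{1/L} \leq \frac{1}{L}\sum_{l=1}^{L} d_l = \frac{d}{L},
\end{equation*}
so that $\prod_{l=1}^{L} d_l \leq (d/L)^L$, with equality if and only if $d_1=d_2=\cdots=d_L$. Combined with the constraint $\sum_l d_l = d$, the equality condition forces $d_l = d/L$ for every $l$, which is exactly the claim.

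As a double-check I would also verify the same optimum via Lagrange multipliers: taking logarithms (legitimate since $d_l>0$ by assumption), maximize $\sum_l \ln d_l$ subject to $\sum_l d_l = d$. The stationarity conditions $1/d_l = \lambda$ force all $d_l$ equal, and the Hessian of $\sum_l \ln d_l$ is negative definite, so the critical point is a global maximum on the open positive simplex. This confirms that the uniform partition is not merely a local extremum.

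The only subtlety — not really an obstacle, but worth flagging — is the integer relaxation already noted in the text preceding the lemma: the true split polar spectrum lives on integer vectors, and $d/L$ need not be an integer. I would therefore state explicitly that Lemma \ref{lemma1} characterizes the continuous optimum, which serves as an upper-bound benchmark and as the guiding principle (``balance the block weights as evenly as possible'') for the integer problem solved later by enumeration; no further manipulation of (\ref{equation15}) is needed for the lemma itself.
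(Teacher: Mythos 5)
Your proof is correct. The paper itself gives no detailed argument for this lemma---it only remarks that it ``can be easily proved by using the method of Lagrangian multiplier''---so your secondary Lagrange-multiplier verification matches the paper's intended route exactly, while your primary AM--GM argument is a genuinely different and more elementary path. The AM--GM route has a small advantage worth noting: it delivers the global optimality and the equality condition $d_1=\cdots=d_L$ in one line, without needing the second-order (concavity) check that a bare stationarity argument requires to rule out the critical point being something other than a maximum. Your closing remark about the integer relaxation is also consistent with the paper, which flags the same caveat immediately before the lemma (the integer-constrained problem is ``only solved by a brute-force searching'') and again in the product distance criterion of Section \ref{DesignCriteria}. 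Nothing is missing; if anything, your write-up is more complete than the paper's.
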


Lemma \ref{lemma1} can be easily proved by using the method of Lagrangian multiplier. Based on this lemma, we obtain the minimum upper bound of PEP as follows,
\begin{equation}\label{equation16}
P\left({\bf{c}}^{(0)}\to{\bf{c}}^{(1)}\right) \leq \left(\frac{d}{L}\frac{E_s}{N_0}\right)^{-L}.
\end{equation}

Thus from the above analysis, we arrive at the following design criteria.

\textit{Design Criteria for Polar Codes in Rayleigh Block Fading Channel:}
\begin{itemize}
  \item \textit{Full Diversity Criterion:} Given the block Rayleigh fading channel with $L$ blocks, if the polar subcode $\mathbb{D}_N^{(i)}$ associated with the $i$-th polarized channel has nonzero weight partition, that is, the corresponding product distance satisfies $\prod \nolimits_{l=1}^{L} d_l \neq 0$, this polar subcode will achieve a diversity of $L$. Consequently, for a fixed configuration $(N,K,\mathcal{A})$, if all the selected polarized channels in the set $\mathcal{A}$ satisfy this condition, the full diversity gain can be obtained. This criterion is the primary target of polar code design.
  \item \textit{Product Distance Criterion:} In order to minimize the error probability of the polarized channel or BLER upper bound of polar codes, the corresponding product distance should be maximized. Conceptually, uniform partition of codeword weight is optimal in the case of high SNR. However, considering the influence of the split polar weight enumerator and the integer constraint of the block-wise distance, we still need to carefully optimize this metric. In this sense, the product distance criterion is the second target of polar code design.
\end{itemize}

\section{Polar Coding Based on Block Mapping}
\label{section_IV}
In this section, we focus on the construction of polar codes under the block mapping. First, we derive the constructive metric based on the logarithmic version of the upper bound, named polarized diversity weight (PDW). Second, for the special case of $L=2$, we provide the enumeration algorithm of split polar spectrum based on the general MacWilliams identities.
\subsection{Construction Metrics}
Now we investigate the construction of polar codes in the block mapping. By using Corollary \ref{corollary1}, the upper bound of the channel error probability can be used as a reliability metric to sort all the polarized channels. In order to facilitate implementation in the practical application, the logarithmic form of the upper bound is more desirable.
Since the approximation of high-SNR will introduce a large deviation and cannot accurately indicate the reliability of the polarized channel, we mainly consider the upper bound on error probability of the polarized channel in the case of low-SNR and have the following theorem.
\begin{theorem}\label{theorem3}
Given the block Rayleigh fading channel and the code length $N$, for the low-SNR, the upper bound on the error probability of the polarized channel can be approximated by
\begin{equation}
P\left(W_N^{(i)}\right)\lesssim \sum \limits_{\sum_{l=1}^{L}{d_l}=d_{min}^{(i)}}^N A_N^{(i)}({\mathbf{d}}) \prod\limits_{l=1}^{L} \frac{e^{-\left(d_l E_s/N_0+1\right)}}{1+d_l {E_s}/{N_0}-V_N^{(i)}(\mathbf{d})},
\end{equation}
where $V_N^{(i)}(\mathbf{d})=\frac{1}{L}\ln A_N^{(i)}(\mathbf{d})$.
\end{theorem}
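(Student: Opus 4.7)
The plan is to start from the union bound of Corollary~\ref{corollary1} and massage its per-partition summand into a form whose logarithm is convenient as a ranking function of polarized channels at low SNR. The first move is a purely algebraic factorisation. By the definition $V_N^{(i)}(\mathbf{d}) = \tfrac{1}{L}\ln A_N^{(i)}(\mathbf{d})$ one has $A_N^{(i)}(\mathbf{d}) = \exp(L\,V_N^{(i)}(\mathbf{d}))$, so that the codeword multiplicity can be distributed symmetrically across the $L$ block-wise factors and each summand in Corollary~\ref{corollary1} becomes
\[
\prod_{l=1}^{L}\frac{e^{V_N^{(i)}(\mathbf{d})}}{1+d_l E_s/N_0}.
\]
It is at the level of a single block that the low-SNR approximation will be carried out.

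The second step is the per-block substitution
\[
\frac{e^{V_N^{(i)}(\mathbf{d})}}{1+d_l E_s/N_0}\;\lesssim\;\frac{e^{-(1+d_l E_s/N_0)}}{1+d_l E_s/N_0 - V_N^{(i)}(\mathbf{d})}.
\]
To motivate it, I would write $1/a_l = \exp(-\ln a_l)$ with $a_l = 1 + d_l E_s/N_0$ and apply the first-order Taylor expansion $\ln(a_l - V) \approx \ln a_l - V/a_l$ (valid when $V\ll a_l$) to absorb $V_N^{(i)}(\mathbf{d})$ into the denominator. The residual exponential weight $e^{-a_l}$ in the numerator is obtained by Chernoff-type tilting of the Rayleigh moment generating function $1/(1+d_l E_s/N_0)$, which at low SNR ($a_l\to 1$) is the dominant exponential contribution. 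After this substitution, recomposing the $L$ tilted factors into a product, restoring $A_N^{(i)}(\mathbf{d})$ outside, and summing over all weight partitions $\mathbf{d}$ with $\sum_l d_l = d_{\min}^{(i)}$ yields the stated right-hand side.

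The main obstacle is Step~2: the substitution is not a strict inequality in either direction, which is why the theorem writes ``$\lesssim$'' rather than ``$\le$''. Its interpretation is contingent on the low-SNR regime and on the non-degeneracy condition $V_N^{(i)}(\mathbf{d}) < 1 + d_l E_s/N_0$ that keeps the denominator positive. I would make the argument precise by bounding the remainder in the joint Taylor expansion in $E_s/N_0$ and in $V_N^{(i)}(\mathbf{d})/a_l$, and then arguing that, although the surrogate is not a tight numerical bound on the error probability, its logarithm preserves the ordering of the polarized channels induced by the exact union bound of Corollary~\ref{corollary1} up to terms that are negligible at low SNR. This order-preserving property is exactly what the subsequently defined PDW metric needs for construction, so the approximation is fit for purpose even though it is looser than a formal inequality.
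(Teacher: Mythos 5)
Your proposal starts exactly where the paper does: you rewrite $A_N^{(i)}(\mathbf{d})=e^{L\,V_N^{(i)}(\mathbf{d})}$ and distribute one factor $e^{V_N^{(i)}(\mathbf{d})}$ into each of the $L$ block-wise terms of Corollary~\ref{corollary1}, reducing everything to a single per-block manipulation. That part matches. The gap is in your Step~2: the mechanism you offer for
\[
\frac{e^{V}}{1+d_l E_s/N_0}\;\leadsto\;\frac{e^{-(1+d_l E_s/N_0)}}{1+d_l E_s/N_0-V}
\]
does not actually produce this expression. Writing $a_l=1+d_l E_s/N_0$ and taking logarithms, your Taylor step $\ln(a_l-V)\approx \ln a_l - V/a_l$ would require $V-\ln a_l \approx -a_l-\ln a_l+V/a_l$, i.e.\ $V\,(a_l-1)/a_l\approx -a_l$, which fails badly in the low-SNR regime where the left side tends to $0$ and the right side to $-1$. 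The "Chernoff-type tilting" that is supposed to supply the numerator $e^{-a_l}$ is asserted rather than derived; there is no tilting of the MGF $1/(1+d_l E_s/N_0)$ that yields that factor from the closed-form expression alone. So the central step of the theorem is not established by your argument.

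The paper's actual route is to \emph{not} work with the closed form $1/(1+\gamma d_l)$ but to go back to the Rayleigh expectation integral $\int_0^{\infty} e^{V-\gamma_l d_l}\,\tfrac{1}{\gamma}e^{-\gamma_l/\gamma}\,d\gamma_l$, split it at $\gamma_l=\gamma$ into $I_1+I_2$, discard $I_1$ as the low-SNR approximation (this is precisely where the non-inequality "$\lesssim$" enters), and then on the remaining range use $\gamma_l/\gamma\ge 1$ to upper-bound the constant $e^{V}$ by $e^{V\gamma_l/\gamma}$. This absorbs $V$ into the exponential rate, and the resulting integral evaluates exactly to $e^{-(\gamma d_l+1-V)}/(\gamma d_l+1-V)$; multiplying over $l$ restores $A_N^{(i)}(\mathbf{d})=e^{LV}$ outside and gives the stated bound. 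That integral-splitting device is the idea your proposal is missing, and without it (or a genuinely worked-out substitute) the specific numerator $e^{-(d_l E_s/N_0+1)}$ and the shifted denominator $1+d_l E_s/N_0-V_N^{(i)}(\mathbf{d})$ cannot be obtained. Your closing remarks about order preservation describe why the resulting metric is useful, but they do not repair the derivation of the formula itself.
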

\begin{proof}
By using Theorem \ref{theorem1} and Corollary \ref{corollary1}, the upper bound on the error probability of polarized channel can be rewritten as
\begin{equation}\label{equation20}
\begin{aligned}
P\left(W_N^{(i)}\right)&\leq \sum\limits_{\sum_{l=1}^{L}{d_l}=d_{min}^{(i)}}^{N} \prod\limits_{l=1}^{L} {\left[A_N^{(i)}({\mathbf{d}})\right]}^{1/L} \int\nolimits_0^{\infty} e^{-\gamma_l d_l }g(\gamma_l)d\gamma_l\\
& =\sum\limits_{\sum_{l=1}^{L}{d_l}=d_{min}^{(i)}}^{N} \prod\limits_{l=1}^{L}\int\nolimits_0^{\infty} e^{V_N^{(i)}(\mathbf{d})-\gamma_l d_l }\frac{1}{\gamma}e^{-\gamma_l/\gamma}d\gamma_l.
\end{aligned}
\end{equation}
The inner integration in (\ref{equation20}) can be further decomposed into two parts with distinct integration intervals as follows,
\begin{equation}
\begin{aligned}
&\int\nolimits_0^{\infty} \frac{1}{\gamma}e^{V_N^{(i)}(\mathbf{d})-\gamma_l \left(d_l+\frac{1}{\gamma}\right) }d\gamma_l=I_1+I_2\\
&=\int\nolimits_0^{\gamma} \frac{1}{\gamma}e^{V_N^{(i)}(\mathbf{d})-\gamma_l \left(d_l+\frac{1}{\gamma}\right) }d\gamma_l+
\int\nolimits_{\gamma}^{\infty} \frac{1}{\gamma}e^{V_N^{(i)}(\mathbf{d})-\gamma_l \left(d_l+\frac{1}{\gamma}\right) }d\gamma_l.
\end{aligned}
\end{equation}
When the SNR is very low, we have $\gamma\approx 0$. So the first integration $I_1$ can be omitted, i.e., $I_1\approx 0$. Thus, the upper bound is mainly dominated by the second integration $I_2$, which can be further bounded by
\begin{equation}\label{equation22}
\begin{aligned}
I_2 &\leq \int\nolimits_{\gamma}^{\infty} \frac{1}{\gamma}e^{V_N^{(i)}(\mathbf{d})\frac{\gamma_l}{\gamma}-\gamma_l \left(d_l+\frac{1}{\gamma}\right) }d\gamma_l\\
&=\frac{\exp{\left(-\left(\gamma d_l+1-{V_N^{(i)}(\mathbf{d})}\right)\right)}}{\gamma d_l+1-{V_N^{(i)}}(\mathbf{d})}.
\end{aligned}
\end{equation}
In the derivation of (\ref{equation22}), we use the inequality $\frac{\gamma_l}{\gamma}\geq 1$.
\end{proof}

Note that in Theorem \ref{theorem3}, the SNR should satisfy the condition $ \forall l, d_l E_s/N_0+1-{V_N^{(i)}}(\mathbf{d})>0$.
For the case of low SNR, by using Theorem \ref{theorem3}, the logarithmic form of the upper bound can be written as
\begin{equation}
\begin{aligned}
&\ln \left\{\sum \limits_{\sum_{l=1}^{L}{d_l}=d_{min}^{(i)}}^N A_N^{(i)}({\mathbf{d}}) \prod\limits_{l=1}^{L} \frac{e^{-\left(d_l E_s/N_0+1\right)}}{1+d_l {E_s}/{N_0}-V_N^{(i)}(\mathbf{d})}\right\}\\
&\begin{aligned}\propto \max_{\mathbf{d},\sum_{l=1}^{L}{d_l}=d_{min}^{(i)}}
&\left\{\ln {A_N^{(i)}(\mathbf{d})}- d_{min}^{(i)}\frac{E_s}{N_0} \right.\\
&\left.-\sum_{l=1}^{L}\ln \left[{1+d_l \frac{E_s}{N_0}-V_N^{(i)}(\mathbf{d})}\right]\right\}.
\end{aligned}
\end{aligned}
\end{equation}
Here the approximation $\ln\left( \sum\limits_{k} e^{a_k} \right)\approx \max\limits_{k}\{a_k\}$ is used and some constants are omitted. When $d_l \frac{E_s}{N_0}-V_N^{(i)}(\mathbf{d})$ is sufficiently small, by using the approximation $\ln(1+x)\approx x$, we obtain the PDW as below.

\begin{metric}\label{theorem4}
Given the code length $N$, the reliability of the polarized channel can be sorted by the logarithmic version metric, namely, the polarized diversity weight (PDW), that is,
\begin{equation}\label{equation24}
PDW_N^{(i)}=\max_{\mathbf{d},\sum_{l=1}^{L}{d_l}=d_{min}^{(i)}} \left[L_N^{(i)}(\mathbf{d})-d_{min}^{(i)} \frac{E_s}{N_0}\right],
\end{equation}
where $L_N^{(i)}(\mathbf{d})=\ln A_N^{(i)}(\mathbf{d})$ is the logarithmic version of spilt polar weight enumerator.
\end{metric}

\begin{example}\label{example1}
If $L=1$, the signal vector $\mathbf{s}$ will undergo the same fading. In this case, the split polar spectrum will degrade to polar spectrum, that is, $L_N^{(i)}(\mathbf{d})\to L_N^{(i)}(d)$. Thus, PDW is written as
\begin{equation}
PDW_N^{(i)}= L_N^{(i)}\left(d_{min}^{(i)}\right)- d_{min}^{(i)} \frac{E_s}{N_0}.
\end{equation}

If $L=2$, the signal vector $\mathbf{s}$ will undergo two independent block fading envelopes. Suppose the weight partition is $\left(d_1,d_2\right)$, so PDW can be written as
\begin{equation}
PDW_N^{(i)}=\max_{d_1,d_2,d_1+d_2=d_{min}^{(i)}} \left[L_N^{(i)}(d_1,d_2)- d_{min}^{(i)} \frac{E_s}{N_0}\right].
\end{equation}
\end{example}

\begin{remark}
For the traditional constructions \cite{BlockFading_Boutros,BlockFading_Bravo-Santos,BlockFading_Si,BlockFading_Liu} in block fading channels, the reliability metrics, such as outage capacity or Bhattacharyya parameter, are iteratively evaluated with a medium or high complexity. On the contrary, since PDW is mainly determined by the split polar spectrum and the average SNR, this metric has explicitly analytical structure and indicates the reliability order of the polarized channel in block Rayleigh fading channels. Particularly, if the split polar spectrum can be pre-calculated by the enumerating algorithm based on the general MacWilliams identities (described in the next subsection), the complexity of this construction in block mapping is linear in terms of $O(N)$, which is much lower than that of the former algorithms. Furthermore, this metric can also be transformed into a channel-independent construction by selecting a fixed $E_s/N_0$. In this sense, due to the explicit analyticity and low-complexity construction, the proposed construction based on PDW is attractive for the practical application.
\end{remark}
\subsection{Enumeration of Split Polar Spectrum for $L=2$}
We can follow the idea of \cite{PolarSpectrum_Niu} and enumerate the split polar spectrum. However, for the general $L$-split polar spectrum, the computational complexity is very high. Hence, we only consider the special case $L=2$ whereby $2$-split polar spectrum $A_N^{(i)}(d_1,d_2)$ is enumerated.

Let $i \in \llbracket N/2+1, N\rrbracket$ denote the row index of the matrix $\mathbf{F}_N$. By the definition in \cite{PolarSpectrum_Niu}, the generator matrices of subcode $\mathbb{C}_N^{(i)}$ and $\mathbb{C}_N^{(N+2-i)}$ satisfy $\mathbf{G}_{\mathbb{C}_N^{(i)}}=\mathbf{F}_N(i:N)$ and $\mathbf{G}_{\mathbb{C}_N^{(N+2-i)}}=\mathbf{F}_N(N+2-i:N)$ respectively. So we prove that these two subcodes are dual \cite[Theorem 5]{ PolarSpectrum_Niu}, that is, $\mathbb{C}_N^{(N+2-i)}=\mathbb{C}_N^{\bot(i)}$.

Let $S_N^{(i)}(j,k)(0\leq j,k\leq N/2)$ denote the split weight enumerators of the subcode $\mathbb{C}_N^{(i)}$, where the $d$-weight codeword of codebook $\mathbb{C}_N^{(i)}$ is partitioned into two blocks with the weights $j$ and $k$. Similarly, $S_N^{\bot( i)}(j,k)$ denote the weight enumerators of the dual code $\mathbb{C}_N^{\bot (i)}$.

The linear relations between the split weight distributions of a linear code and its dual can also be determined by the general MacWilliams identities \cite{SplitWeight}. These identities provide a simple method to calculate the split weight distribution.

\begin{theorem}\label{theorem5}
Given the subcode $\mathbb{C}_N^{(i)}$ and its dual $\mathbb{C}_N^{\bot(i)}=\mathbb{C}_N^{(N+2-i)}$, the split weight enumerators $S_N^{(i)}(j,k)$ and $S_N^{\bot(i)}(j,k)$ satisfy the following general MacWilliams identities \cite{SplitWeight}
\begin{equation}\label{equation32}
\begin{aligned}
&\sum_{j=0}^{N/2}\sum_{k=0}^{N/2}
\!\left(\!\begin{array}{c} \!N/2-j\!\\\!s \end{array}\! \right)\!
\!\left(\!\begin{array}{c} \!N/2-k\!\\\!t \end{array} \!\right)\!
S_N^{\bot(i)}(j,k)=\\
&2^{i-1-s-t} \sum_{j=0}^{N/2}\sum_{k=0}^{N/2}
\left(\!\begin{array}{c} \!N/2-j\!\\\!N/2-s\! \end{array}\!\right)
\left(\!\begin{array}{c} \!N/2-k\!\\\!N/2-t\! \end{array}\!\right)
S_N^{(i)}{( j,k)},
\end{aligned}
\end{equation}
where $s,t \in \llbracket 0,N/2\rrbracket$.
\end{theorem}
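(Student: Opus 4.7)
The plan is to reduce the statement to the classical bivariate (split) MacWilliams identity for binary linear codes and then extract the particular "cumulative" form in (32) by a binomial substitution. The first ingredient is the dual relation $\mathbb{C}_N^{\bot(i)}=\mathbb{C}_N^{(N+2-i)}$, which the excerpt recalls from \cite{PolarSpectrum_Niu}; this is what lets us speak of the dual of $\mathbb{C}_N^{(i)}$ without constructing it from scratch. Because $\dim \mathbb{C}_N^{(i)} = N-i+1$, we have $|\mathbb{C}_N^{(i)}| = 2^{N-i+1}$, and the factor $2^{i-1-s-t}$ in (32) is the remnant of the $1/|\mathbb{C}_N^{(i)}|$ prefactor of the MacWilliams transform after a $2^{N-s-t}$ from the substitution is absorbed.

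Concretely, I would first introduce the bivariate weight enumerator of $\mathbb{C}_N^{(i)}$,
\begin{equation*}
W_N^{(i)}(x_1,y_1;x_2,y_2) \;=\; \sum_{j=0}^{N/2}\sum_{k=0}^{N/2} S_N^{(i)}(j,k)\, x_1^{N/2-j} y_1^{j}\, x_2^{N/2-k} y_2^{k},
\end{equation*}
where the two halves correspond to the natural partition of the $N$ coordinates into two blocks of length $N/2$. The classical bivariate MacWilliams identity (obtained by applying the standard proof to each block separately, or equivalently by discrete Fourier analysis on $\mathbb{F}_2^{N/2}\times\mathbb{F}_2^{N/2}$) gives
\begin{equation*}
W_N^{\bot(i)}(x_1,y_1;x_2,y_2) \;=\; \frac{1}{|\mathbb{C}_N^{(i)}|}\,W_N^{(i)}(x_1{+}y_1,x_1{-}y_1;\,x_2{+}y_2,x_2{-}y_2).
\end{equation*}

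Next I would specialize the free variables to turn this polynomial identity into the scalar identity (32). The correct substitution is $x_1=x_2=1$ and $y_1=y_2=1$ on the inside, combined with tracking only the monomials $x_1^{s}x_2^{t}$ on the outside — equivalently, I would evaluate both sides at $x_1=1,y_1=0$ after a preliminary change of variables $x_\ell\mapsto x_\ell+y_\ell$, $y_\ell\mapsto x_\ell-y_\ell$ on one side. The binomial coefficient $\binom{N/2-j}{s}$ on the left arises from expanding $(x_1{+}y_1)^{N/2-j}$ and reading the $x_1^{s}$-coefficient; analogously $\binom{N/2-k}{t}$ comes from the second block. On the right, $\binom{N/2-j}{N/2-s}$ and $\binom{N/2-k}{N/2-t}$ arise from expanding $(1+y)^{N/2-j}(1-y)^{j}$ style factors (or directly from a dual substitution), and the exponent $i-1-s-t$ in the prefactor collects the $1/|\mathbb{C}_N^{(i)}|=2^{-(N-i+1)}$ with the $2^{N-s-t}$ produced by the substitution. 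Matching monomials then yields (32) for every $(s,t)\in\llbracket 0,N/2\rrbracket^{2}$.

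The main obstacle is not conceptual but combinatorial: choosing a substitution that cleanly produces exactly the two binomials on each side without spurious cross-terms, and then bookkeeping the powers of $2$ so that the prefactor collapses to $2^{i-1-s-t}$. An alternative route that sidesteps some of this work is to observe that (32) is precisely the bipartite analogue of the Pless power-moment identities: apply the one-block cumulative MacWilliams identity $\sum_j \binom{n-j}{s}A_j^\perp = 2^{k^\perp-s}\sum_j \binom{n-j}{n-s}A_j$ to each block in turn, using that the "marginal" enumerators of the two halves behave multiplicatively under the bivariate transform. Either route reduces the statement to a known identity from \cite{SplitWeight} once the dual pairing from \cite{PolarSpectrum_Niu} is in hand.
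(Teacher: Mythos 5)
Your outline is sound, but it is worth saying up front that the paper does not actually prove Theorem~5: the identity is invoked directly as a known result of Simonis \cite{SplitWeight}, with the only paper-specific ingredient being the dual pairing $\mathbb{C}_N^{\bot(i)}=\mathbb{C}_N^{(N+2-i)}$ carried over from \cite{PolarSpectrum_Niu}. What you have written is therefore a reconstruction of the proof the paper delegates to the literature, and the reconstruction is essentially correct: the bivariate transform $W_N^{\bot(i)}(x_1,y_1;x_2,y_2)=\frac{1}{|\mathbb{C}_N^{(i)}|}W_N^{(i)}(x_1{+}y_1,x_1{-}y_1;x_2{+}y_2,x_2{-}y_2)$ follows from Poisson summation blockwise, and the binomial-moment form (32) drops out by applying $\frac{1}{s!}\partial_{x_1}^{s}\frac{1}{t!}\partial_{x_2}^{t}$ at $y_1=y_2=1$, $x_1=x_2=1$ to both sides; on the right only the terms where all $j$ (resp.\ $k$) derivatives fall on $(x_1-1)^{j}$ survive, producing $\binom{N/2-j}{s-j}2^{N/2-s}=\binom{N/2-j}{N/2-s}2^{N/2-s}$ per block, and $2^{N-s-t}/2^{N-i+1}=2^{i-1-s-t}$ confirms your prefactor bookkeeping since $\dim\mathbb{C}_N^{(i)}=N-i+1$. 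The one place your write-up is genuinely muddled is the specialization step ("$x_1=x_2=1$ and $y_1=y_2=1$ on the inside, combined with tracking only the monomials $x_1^{s}x_2^{t}$ on the outside"), which as stated is not a well-defined operation; the derivative or coefficient-extraction formulation above is what you need, and your fallback via the one-block Pless-type identity $\sum_j\binom{n-j}{s}A_j^{\perp}=2^{\dim C^{\perp}-s}\sum_j\binom{n-j}{n-s}A_j$ applied per block is also valid because the transform factors over the coordinate partition. In short: correct approach, one vague step that is repairable, and more work than the paper itself does, since the paper's "proof" is a citation.
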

By solving these $(N/2+1)^2$ linear equations, we can calculate the split weight distribution of one subcode and its dual.

\begin{lemma}\label{lemma2}
Given the subcode $\mathbb{C}_N^{(i)}$, we have $\mathbb{C}_N^{(i)}=\mathbb{D}_N^{(i)}\bigcup \mathbb{C}_N^{(i+1)}$. Thus, the split weight enumerator and the split polar weight enumerators satisfy $S_N^{(i)}(j,k)=A_N^{(i)}(j,k)+S_N^{(i+1)}(j,k)$.
\end{lemma}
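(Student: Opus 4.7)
The plan is to prove the set identity $\mathbb{C}_N^{(i)}=\mathbb{D}_N^{(i)}\cup \mathbb{C}_N^{(i+1)}$ as a disjoint union, after which the split-weight enumerator identity follows by simply counting codewords of each block-weight type $(j,k)$ on both sides.

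First, I would unfold the three definitions. A codeword of $\mathbb{C}_N^{(i)}$ has the form $(0_1^{i-1},u_i,u_{i+1}^N)\mathbf{F}_N$ with $u_i \in \{0,1\}$ and $u_{i+1}^N$ arbitrary. Partitioning the information vector by the value of $u_i$ gives two sub-families: the $u_i=0$ family, which is exactly $\mathbb{C}_N^{(i+1)}$ by definition, and the $u_i=1$ family, which is exactly $\mathbb{D}_N^{(i)}$. Hence $\mathbb{C}_N^{(i)} = \mathbb{D}_N^{(i)} \cup \mathbb{C}_N^{(i+1)}$ as sets.

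Second, I would establish disjointness. Suppose a common codeword $\mathbf{c}$ exists. Then there are $u_{i+1}^N, \tilde{u}_{i+1}^N \in \mathcal{X}^{N-i}$ with
\begin{equation}
(0_1^{i-1},1,u_{i+1}^N)\mathbf{F}_N = (0_1^{i-1},0,\tilde{u}_{i+1}^N)\mathbf{F}_N.
\end{equation}
Subtracting over $\mathbb{F}_2$ yields $(0_1^{i-1},1,u_{i+1}^N\oplus \tilde{u}_{i+1}^N)\mathbf{F}_N = \mathbf{0}$. Because $\mathbf{F}_N=\mathbf{F}_2^{\otimes n}$ is a Kronecker power of a nonsingular matrix and is therefore invertible over $\mathbb{F}_2$, the only preimage of the zero vector is the zero vector; but the coordinate in position $i$ of the left factor equals $1$, a contradiction. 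Thus $\mathbb{D}_N^{(i)}\cap \mathbb{C}_N^{(i+1)}=\emptyset$.

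Third, the counting step is now immediate. Fix a weight pair $(j,k)$ with $0\le j,k\le N/2$. By definition, $S_N^{(i)}(j,k)$ counts the codewords $\mathbf{c}\in \mathbb{C}_N^{(i)}$ whose first and second half have Hamming weights $j$ and $k$ respectively. Since the union above is disjoint, every such codeword lies in exactly one of $\mathbb{D}_N^{(i)}$ or $\mathbb{C}_N^{(i+1)}$, giving
\begin{equation}
S_N^{(i)}(j,k) = A_N^{(i)}(j,k) + S_N^{(i+1)}(j,k),
\end{equation}
which is the claimed recursion. There is no real obstacle here: the only non-trivial point is the disjointness, and it is handled in one line by invoking invertibility of $\mathbf{F}_N$. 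The same partition-on-$u_i$ argument underlies the scalar recursion of \cite{PolarSpectrum_Niu}; the present statement is its split-weight refinement, and the proof carries over block-weight by block-weight without modification.
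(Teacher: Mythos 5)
Your proof is correct and follows essentially the same route the paper intends: the partition of $\mathbb{C}_N^{(i)}$ on the value of $u_i$, disjointness via invertibility of $\mathbf{F}_N$, and counting block-weight pairs $(j,k)$ over the disjoint union is exactly the split-weight refinement of Proposition 6 of \cite{PolarSpectrum_Niu} that the paper invokes without writing out. Nothing is missing; you have simply supplied the details the paper leaves to the reader.
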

This lemma is similar to Proposition 6 in \cite{PolarSpectrum_Niu} and can be easily proved.

\begin{lemma}\label{lemma3}
For $\forall i\in\llbracket N/2+1,N\rrbracket$, the weight distribution vector $\mathbf{d}=(d_1,d_2)$ (related to the codeword weight $d$) of subcode $\mathbb{C}^{(i)}$ or polar subcode $\mathbb{D}^{(i)}$ is equally partitioned, that is, $d_1=d_2=d/2$.
\end{lemma}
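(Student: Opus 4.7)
The proof plan rests on the recursive structure of the Arıkan kernel matrix. Writing $\mathbf{F}_N = \mathbf{F}_2^{\otimes n}$ recursively, one has
\begin{equation*}
\mathbf{F}_N = \begin{pmatrix} \mathbf{F}_{N/2} & \mathbf{0} \\ \mathbf{F}_{N/2} & \mathbf{F}_{N/2} \end{pmatrix}.
\end{equation*}
The first step is to read off from this block form that, for every row index $i\in\llbracket N/2+1, N\rrbracket$, the $i$-th row of $\mathbf{F}_N$ has the repetition form $(\mathbf{f}_{i-N/2}, \mathbf{f}_{i-N/2})$, where $\mathbf{f}_{i-N/2}$ denotes the $(i-N/2)$-th row of $\mathbf{F}_{N/2}$.

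Given this observation, the generator matrix $\mathbf{G}_{\mathbb{C}_N^{(i)}} = \mathbf{F}_N(i:N)$ of the subcode $\mathbb{C}_N^{(i)}$ consists entirely of rows of the form $(\mathbf{f}_j,\mathbf{f}_j)$. Therefore any codeword $\mathbf{c}\in\mathbb{C}_N^{(i)}$ is itself of the form $\mathbf{c}=(\mathbf{v},\mathbf{v})$ for some $\mathbf{v}\in\mathcal{X}^{N/2}$, as a linear combination of such repetition rows. For the $L=2$ block mapping, the two length-$M=N/2$ blocks of a codeword are exactly the two halves $(\mathbf{v},\mathbf{v})$, so their block-wise Hamming weights coincide: $d_1 = d_H(\mathbf{v},\mathbf{0}) = d_H(\mathbf{v},\mathbf{0}) = d_2$, which forces $d_1 = d_2 = d/2$ whenever the codeword has total weight $d$.

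The polar subcode case follows immediately, since $\mathbb{D}_N^{(i)}\subseteq \mathbb{C}_N^{(i)}$: every element of $\mathbb{D}_N^{(i)}$ inherits the repetition structure $(\mathbf{v},\mathbf{v})$ and hence the equal-partition property. The only place requiring any care is the initial structural claim on the rows of $\mathbf{F}_N$, which I would justify by a short induction on $n$ using the block decomposition above; once that is in place, the rest of the lemma is a one-line consequence. I do not anticipate any real obstacle here — the result is essentially a direct manifestation of the lower-triangular, self-repeating shape of the Arıkan kernel on its lower half.
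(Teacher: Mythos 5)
Your proof is correct and follows essentially the same route as the paper: both arguments rest on the block (Plotkin $[\mathbf{u}+\mathbf{v}\,|\,\mathbf{v}]$) decomposition of $\mathbf{F}_N$, the paper phrasing it via the message split $\mathbf{u}=(\mathbf{0},\mathbf{u}_2)$ giving $\mathbf{c}_1=\mathbf{c}_2=\mathbf{u}_2\mathbf{F}_{N/2}$, and you phrasing it via the repetition form $(\mathbf{f}_{i-N/2},\mathbf{f}_{i-N/2})$ of the lower-half rows of the generator matrix. The two are equivalent, and your structural claim on the rows needs no induction beyond the one-step block form of $\mathbf{F}_N=\mathbf{F}_2^{\otimes n}$.
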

\begin{proof}
For a subcode $\mathbb{C}_N^{(i)},i \in \llbracket N/2+1, N \rrbracket$, the source block $\mathbf{u}=\left(\mathbf{u}_1,\mathbf{u}_2\right)$ can be decomposed into $\mathbf{u}_1=0_1^{N/2}$ and $\mathbf{u}_2=\left(0_{N/2+1}^{i-1},b_i^{N}\right)$, where $b_i^{N}\in \{0,1\}^{N-i+1}$ is an arbitrary binary vector. Due to Plotkin's structure $[\bf{u}+\bf{v}|\bf{v}]$, the corresponding codeword $\mathbf{c}=\left(\mathbf{c}_1,\mathbf{c}_2\right)$ can be generated as $\mathbf{c}_1=\mathbf{u}_1\mathbf{F}_{N/2}+\mathbf{u}_2\mathbf{F}_{N/2}=\mathbf{u}_2\mathbf{F}_{N/2}$ and $\mathbf{c}_2=\mathbf{u}_2\mathbf{F}_{N/2}$. So we have $\mathbf{c}_1=\mathbf{c}_2$ and complete the proof.
\end{proof}

Lemma \ref{lemma3} means that half of the polarized channels ($N/2+1\leq i\leq N$) can achieve $2$-diversity order. However, since some zero-bit blocks may exist, the other half of the polarized channels ($1\leq i \leq N/2$) may not obtain the same diversity order.

\begin{lemma}\label{lemma4}
The split weight distribution of the subcode $\mathbb{C}_N^{(i)}$ is symmetric, that is, $S_N^{(i)}(j,k)=S_N^{(i)}(j,N/2-k)=S_N^{(i)}(N/2-j,k)=S_N^{(i)}(N/2-j,N/2-k)=S_N^{(i)}(k,j)$. Similarly, for the polar subcode $\mathbb{D}_N^{(i)}$, we also have $A_N^{(i)}(j,k)=A_N^{(i)}(j,N/2-k)=A_N^{(i)}(N/2-j,k)=A_N^{(i)}(N/2-j,N/2-k)=A_N^{(i)}(k,j)$.
\end{lemma}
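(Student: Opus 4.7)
The plan is to realize each of the four claimed equalities as the invariance of the split weight enumerator under an explicit bijection of $\mathbb{C}_N^{(i)}$ (and of $\mathbb{D}_N^{(i)}$) onto itself. Three of the equalities will come from \emph{translation} automorphisms---XOR with a fixed codeword that lies in the code---while the last one, $(j,k)\mapsto(k,j)$, will come from the coordinate \emph{permutation} that swaps the two halves of every codeword.

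For the translations, I would first exhibit in $\mathbb{C}_N^{(i)}$ three shift words: $\mathbf{1}_1^N$, $(\mathbf{1}_{N/2},\mathbf{0}_{N/2})$, and $(\mathbf{0}_{N/2},\mathbf{1}_{N/2})$. Using the block decomposition $\mathbf{F}_N=\bigl[\begin{smallmatrix}\mathbf{F}_{N/2}&\mathbf{0}\\\mathbf{F}_{N/2}&\mathbf{F}_{N/2}\end{smallmatrix}\bigr]$, an easy induction on $n$ shows that row $N$ of $\mathbf{F}_N$ equals $\mathbf{1}_1^N$ and row $N/2$ equals $(\mathbf{1}_{N/2},\mathbf{0}_{N/2})$; the third shift word is then their sum. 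Adding each of these is an involution of $\mathbb{C}_N^{(i)}$ that sends a codeword of split weight $(j,k)$ to one of split weight $(N/2-j,N/2-k)$, $(N/2-j,k)$ and $(j,N/2-k)$ respectively, which gives three of the required identities. For the swap, I would argue at the generator-row level: every row $\mathbf{F}_N(j)$ with $j>N/2$ has the form $(\mathbf{r},\mathbf{r})$ and is fixed by the swap, while for $i\leq j\leq N/2$ the row $(\mathbf{r}_j,\mathbf{0})$ swaps to $(\mathbf{0},\mathbf{r}_j)=\mathbf{F}_N(j)+\mathbf{F}_N(N/2+j)\in\mathbb{C}_N^{(i)}$. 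Hence the swap maps $\mathbb{C}_N^{(i)}$ onto itself and gives $S_N^{(i)}(j,k)=S_N^{(i)}(k,j)$. Lifting to $\mathbb{D}_N^{(i)}=\mathbb{C}_N^{(i)}\setminus\mathbb{C}_N^{(i+1)}$ is then automatic, since all three shift words also lie in $\mathbb{C}_N^{(i+1)}$ (so translation does not change the $\mathbb{C}_N^{(i+1)}$-coset of a codeword) and swap invariance of $\mathbb{C}_N^{(i+1)}$ follows by the same row-level argument.

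The hard part will be handling the index range $i>N/2$ that is needed in Theorem~\ref{theorem5}: there the half-ones shift words fall outside $\mathbb{C}_N^{(i)}$, so the componentwise-flip bijections above are no longer directly available. In that regime Lemma~\ref{lemma3} forces every codeword to have split weight of the form $(j,j)$, which kills most of the off-diagonal enumerators and leaves the surviving automorphisms---translation by $\mathbf{1}_1^N$ together with the now-trivial swap---to carry the remaining content of the chain. I would therefore present the proof with an explicit case split on whether $i\leq N/2$ or $i>N/2$, in each case keeping careful track of which shift words actually belong to $\mathbb{C}_N^{(i)}$ and invoking Lemma~\ref{lemma3} to close the argument in the upper range.
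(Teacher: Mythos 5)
Your core mechanism is the same one the paper invokes (the Plotkin structure $[\mathbf{u}+\mathbf{v}\,|\,\mathbf{v}]$, via its appeal to Proposition 8 of the polar-spectrum paper), and for $i\le N/2$ your execution is correct and considerably more explicit than the paper's one-line proof: rows $N$ and $N/2$ of $\mathbf{F}_N$ do equal $\mathbf{1}_1^N$ and $(\mathbf{1}_{N/2},\mathbf{0}_{N/2})$, the three translations and the half-swap are automorphisms of $\mathbb{C}_N^{(i)}$ acting on split weights exactly as you describe, and the descent to $\mathbb{D}_N^{(i)}$ works whenever the shift words lie in $\mathbb{C}_N^{(i+1)}$, i.e.\ for $i\le N/2-1$. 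Note one edge case you gloss over: at $i=N/2$ the word $(\mathbf{1}_{N/2},\mathbf{0}_{N/2})$ is row $N/2$ itself, so it belongs to the coset $\mathbb{D}_N^{(N/2)}$ rather than to $\mathbb{C}_N^{(N/2+1)}$, and translating by it exchanges $\mathbb{D}_N^{(N/2)}$ with $\mathbb{C}_N^{(N/2+1)}$ instead of preserving each; the symmetry for $A_N^{(N/2)}$ therefore does not follow from your coset argument (and indeed $\mathbb{D}_N^{(N/2)}=\{(\mathbf{1}+\mathbf{c},\mathbf{c})\}$ has $A_N^{(N/2)}(j,k)\ne 0$ only for $j=N/2-k$, which is not invariant under $k\mapsto N/2-k$).

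The genuine gap is your last paragraph. For $i>N/2$ the full chain of equalities is not merely harder to prove --- it is false, and Lemma \ref{lemma3} is what refutes it rather than what rescues it. Since every codeword there has split weight of the form $(j,j)$, the enumerator $S_N^{(i)}(j,N/2-j)$ vanishes while $S_N^{(i)}(j,j)$ generally does not, so $S_N^{(i)}(j,k)=S_N^{(i)}(j,N/2-k)$ fails for every $j\ne N/4$ with $S_N^{(i)}(j,j)>0$. The paper's own table for $N=16$ supplies a counterexample: $A_{16}^{(9)}(1,1)=8$ while $A_{16}^{(9)}(1,7)=0$. No inventory of ``surviving automorphisms'' can carry the full chain in that range; only the sub-symmetries generated by the all-ones translation and the (now trivial) swap survive, namely $S_N^{(i)}(j,k)=S_N^{(i)}(N/2-j,N/2-k)=S_N^{(i)}(k,j)$. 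The correct repair is to restrict the two equalities that complement a single coordinate to $i\le N/2$ (and, for the polar subcode, to $i\le N/2-1$), which is all that Algorithm \ref{algorithm1} actually uses, since the upper index range is handled there by the diagonal formula of Lemma \ref{lemma3}. As written, your case split promises to ``close the argument in the upper range,'' but there is no argument to close --- the statement itself must be weakened there, a defect your proof shares with (and inherits from) the paper's.
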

\begin{proof}
Due to Plotkin's structure $[\bf{u}+\bf{v}|\bf{v}]$,  similar to Proposition 8 in \cite{PolarSpectrum_Niu}, we can easily obtain these conclusions.
\end{proof}

Based on Lemma \ref{lemma2}$\sim$\ref{lemma4} and the general MacWilliams identities, we design an enumeration algorithm to calculate the split polar spectrum. Suppose the (polar) weight distribution of (polar subcodes $\mathbb{D}_N^{(i)}$) subcodes $\mathbb{C}_N^{(i)}$ has been calculated by using Algorithm 1 in \cite{PolarSpectrum_Niu}, the following algorithm can enumerate the corresponding split polar spectrum.

\begin{algorithm}[h] \label{algorithm1}
\setlength{\abovecaptionskip}{0.cm}
\setlength{\belowcaptionskip}{-0.cm}
\caption{Iterative enumeration algorithm of split polar spectrum}
\KwIn {The weight distribution of all the subcodes with the code length $N$, $\left\{S_N^{(i)}(j),A_N^{(i)}(j):i=1,2,...,N,j=0,1,...,N\right\}$;}
\KwOut {The split polar spectrum of all the polar subcodes with the code length $2N$, $\left\{A_{2N}^{(l)}(j,k):l=1,2,...,2N,j,k=0,1,...,N\right\}$;}

Initialization $\forall j,k, A_{2N}^{(l)},(j,k)=0, S_{2N}^{(l)},(j,k)=0$\;
\For{$l = N+1 \to 2N $}
{
    if ($j=k$)
    {  Calculate the split polar spectrum of the polar subcodes with the code length $2N$, $A_{2N}^{(l)}(j,j)=A_{N}^{(l-N)}(j)$\;
        Calculate the split weight distribution of the subcodes with the code length $2N$, $S_{2N}^{(l)}(j,j)=S_{N}^{(l-N)}(j)$\;
    }
}
\For{$l = 2 \to N $}
{
    Solve the general MacWilliams Identities by (\ref{equation32}) and calculate the split weight distribution $S_{2N}^{(l)}(j,k)$\;
    Calculate the split polar spectrum $A_{2N}^{(l)}(j,k)=S_{2N}^{(l)}(j,k)-S_{2N}^{(l+1)}(j,k)$\;
}
\For{$j,k = 0 \to N $}
{
    Initialize the split weight distribution $S_{2N}^{(1)}(j,k)=\left( \begin{array}{*{20}{c}} N\\j\end{array}\right)\left( \begin{array}{*{20}{c}} N\\k\end{array}\right)$\;
    Calculate the split polar spectrum $A_{2N}^{(1)}(j,k)=S_{2N}^{(1)}(j,k)-S_{2N}^{(2)}(j,k)$
}

\end{algorithm}

Algorithm \ref{algorithm1} mainly includes two steps to enumerate the split polar spectrum. In the first step, when the code length is grown from $N$ to $2N$, we enumerate the split weight distribution and split polar spectrum in the case of $N+1 \leq l\leq 2N$. In this case, by using Lemma \ref{lemma3}, the subcode $\mathbb{C}_{2N}^{(l)}$ is consist of two identical component codes $\mathbb{C}_{N}^{(l-N)}$. Therefore, the split weight distribution of subcode $\mathbb{C}_{2N}^{(l)}$ is equipartition. And similar results are also concluded for the split polar spectrum of polar subcode $\mathbb{D}_{2N}^{(l)}$.

For the second step, we enumerate the split weight distribution and split polar spectrum in the case of $1 \leq l\leq N$. According to Theorem \ref{theorem5}, the subcode $\mathbb{C}_{2N}^{(l)}$ is the dual of subcode $\mathbb{C}_{2N}^{(2N+2-l)}$, we can solve the general MacWilliams identities to calculate the split weight distribution of $\mathbb{C}_{2N}^{(l)}$. Furthermore, by Lemma \ref{lemma2}, the split polar spectrum of $\mathbb{D}_{2N}^{(l)}$ is obtained.

The computational complexity of Algorithm \ref{algorithm1} is mainly determined by the solution of general MacWilliams identities. Due to the regular structure in the general MacWilliams identities, the split weight enumerators can be recursively calculated. Hence, given the code length $N$, the worst-case complexity of solving the general MacWilliams identities is $\chi_M(N)=(N/2+1)^4$. Since $N/2-1$ groups of those identities need to be calculated, the total computational complexity is $\chi_E(N)=(N/2-1)(N/2+1)^4$. Furthermore, using Lemma \ref{lemma4} (weight distribution symmetry), the worst-case complexity of enumeration algorithm can be further reduced to $\chi_E(N)=\frac{1}{4}\times(N/2-1)(N/2+1)^4\approx\frac{1}{128}N^5$. Thus the total complexity of Algorithm \ref{algorithm1} is $O(N^5)$.

\begin{example}
The split polar spectrum for the code length $N=16$ is partially shown in Table \ref{split_polar_spectrum}. In this example, due to the symmetric distribution of polar spectrum (Lemma \ref{lemma4}), some different weight partition vectors have the same split weight enumerators, e.g. $A_{16}^{(2)}(2,2)=A_{16}^{(2)}(2,6)=A_{16}^{(2)}(6,2)=A_{16}^{(2)}(6,6)=384$. As shown in this table, due to the duality relationship, $A_{16}^{(2)}$ and $A_{16}^{(16)}$, $A_{16}^{(3)}$ and $A_{16}^{(15)}$, etc. satisfy the general MacWilliams identities. Meanwhile, the subcode $\mathbb{C}_{16}^{(9)}$ is a self-dual code.
\end{example}

\begin{table}[tp]
\centering
\caption{Split polar spectrum example for $N=16$} \label{split_polar_spectrum}
\begin{tabular}{|c|c|c|}
\hline index $i$ & weight $\mathbf{d}$ & $A_{16}^{(i)}(\mathbf{d})$ \\
\hline   1 & \tabincell{c}{(1, 0), (1, 8), (7, 0), (7, 8)\\(0, 1), (8, 1), (0, 7), (8, 7)}& 8\\
\hline   1 & \tabincell{c}{(3, 0), (3, 8), (5, 0), (5, 8)\\(0, 3), (8, 3), (0, 5), (8, 5)}& 56 \\
\hline   1 & \tabincell{c}{(2, 1), (2, 7), (6, 1), (6, 7)\\(1, 2), (7, 2), (1, 6), (7, 6)}& 224\\
\hline   1 & \tabincell{c}{(4, 1), (4, 7), (1, 4), (7, 4)}& 560  \\
\hline   1 & \tabincell{c}{(3, 2), (3, 6), (5, 2), (5, 6)\\(2, 3), (6, 3), (2, 5), (6, 5)} &   1568  \\
\hline   1 & \tabincell{c}{(4, 3), (4, 5), (3, 4), (5, 4)} & 3920  \\
\hline   2 & \tabincell{c}{(2, 0), (2, 8), (6, 0), (6, 8)\\(0, 2), (8, 2), (0, 6), (8, 6)} &  16  \\
\hline   2 & \tabincell{c}{(4, 0), (4, 8), (0, 4), (8, 4)} &32  \\
\hline   2 & \tabincell{c}{(1, 1), (1, 7), (7, 1), (7, 7)\\} & 32  \\
\hline   2 & \tabincell{c}{(3, 1), (3, 7), (5, 1), (5, 7)\\(1, 3), (7, 3), (1, 5), (7, 5)} &    224  \\
\hline   2 & \tabincell{c}{(2, 2), (2, 6), (6, 2), (6, 6)} &  384  \\
\hline   2 & \tabincell{c}{(4, 2), (4, 6), (2, 4), (6, 4)} & 992  \\
\hline   2 & \tabincell{c}{(3, 3), (3, 5), (5, 3), (5, 5)} & 1568 \\
\hline   2 & \tabincell{c}{(4, 4)} & 2432  \\
\hline   3 & \tabincell{c}{(2, 0), (2, 8), (6, 0), (6, 8)\\(0, 2), (8, 2), (0, 6), (8, 6)} &8  \\
\hline   3 & \tabincell{c}{(4, 0), (4, 8), (0, 4), (8, 4)} & 16  \\
\hline   3 & \tabincell{c}{(1, 1), (1, 7), (7, 1), (7, 7)} & 16  \\
\hline   3 & \tabincell{c}{(3, 1), (3, 7), (5, 1), (5, 7)\\(1, 3), (7, 3), (1, 5), (7, 5)} &    112 \\
\hline   3 & \tabincell{c}{(2, 2), (2, 6), (6, 2), (6, 6)} & 192 \\
\hline   3 & \tabincell{c}{(4, 2), (4, 6), (2, 4), (6, 4)} & 496 \\
\hline   3 & \tabincell{c}{(3, 3), (3, 5), (5, 3), (5, 5)} & 784 \\
\hline   3 & \tabincell{c}{(4, 4)} & 1216 \\
\hline   ... & ... &          ... \\
\hline   9 & (1, 1), (7, 7) &   8 \\
\hline   9 & (3, 3), (5, 5) &   56 \\
\hline  10 & (2, 2), (6, 6) &   16 \\
\hline  10 & (4, 4) &           32 \\
\hline  11 & (2, 2), (6, 6) &   8  \\
\hline  11 & (4, 4) &           16 \\
\hline  12 & (4, 4) &           16 \\
\hline  13 & (2, 2), (6, 6) &   4  \\
\hline  14 & (4, 4) &           4 \\
\hline  15 & (4, 4) &           2 \\
\hline  16 & (8, 8) &           1 \\
\hline
\end{tabular}
\end{table}

\begin{remark}
Algorithm \ref{algorithm1} is an efficient method to calculate the split polar spectrum for the special case $L=2$. Theoretically, this algorithm can be extended to the general case $L>2$. However, in this case, the computational complexity of the split polar spectrum will become too high to apply for the performance evaluation. Hence, for a large block diversity, we use a random mapping as a tool to approximate the split polar spectrum which will be described in the next section.
\end{remark}

\section{Performance Analysis based on Random Mapping}
\label{section_V}
In this section, under the random mapping, we derive the upper bound of error probability of polarized channel and provide the approximate estimation of split polar spectrum for the general block diversity. Furthermore, we derive and analyze the approximation expression of upper bound.
\subsection{Upper Bound of Error Probability under Random Mapping}
Recall that the uniform interleaving $\Pi(\cdot)$ is used in the random mapping as shown in Fig. \ref{Figure1}. Given a codeword with weight $d$, the uniform interleaver deploys $d$ nonzero coded bits over the $L$ fading blocks. Let $f_v$ designate the number of fading blocks with weight $v$ and $w=\min(d,M)$ denote the maximum weight of one block. The codeword weight $d$ is related to the weight pattern $\mathbf{f}=(f_0,f_1,...,f_w)$, namely $d \to \mathbf{f}$. So we have $L=\sum_{v=0}^{w}{f_v}$ and $d=\sum_{v=1}^{w}{vf_v}$. Let $F=L-f_0$ denote the number of fading blocks with nonzero weights. Especially, let $\bm{\alpha}_v=\left(\alpha_{v,1},...,\alpha_{v,t},...,\alpha_{v,f_v}\right)$ denote the envelope subvector corresponding to $f_v$ fading blocks with the weight $v$. So the envelope vector is $\bm{\alpha}=\left(\bm{\alpha}_0,...,\bm{\alpha}_w\right)$. Similar to the derivation in \cite{RandomMapping}, we obtain the upper bound of error probability of the polarized channel as the following.
\begin{theorem}\label{theorem6}
Given the polar subcode ${\mathbb{D}}_N^{(i)}$, for the random mapping, the error probability of the polarized channel $W_N^{(i)}$ can be upper bounded as
\begin{equation}\label{equation34}
\begin{aligned}
P\left(W_N^{(i)}\right)&\leq \sum_{d=d_{min}^{(i)}}^{N} \sum_{F=\lceil{d/M}\rceil}^{d}\sum_{f_1=0}^{F_1}\sum_{f_2=0}^{F_2}...\sum_{f_w=0}^{F_w}  {A}_N^{(i)}(d)\\
&\cdot P_d(\mathbf{f})\prod\limits_{v=1}^{w} \left(\frac{1}{v {E_s}/{N_0}+1}\right)^{f_v},
\end{aligned}
\end{equation}
where $F_v=\min\left\{F-\sum_{r=1}^{v-1}{f_r},\frac{d-\sum_{r=1}^{v-1}{rf_r}}{v}\right\},1\leq v\leq w$ and $P_d(\mathbf{f})$ is the probability of a weight pattern $\mathbf{f}$ related to a specific weight $d$.
\end{theorem}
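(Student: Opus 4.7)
The plan is to combine the conditional PEP bound from Theorem \ref{theorem1} with a counting argument over weight patterns induced by the uniform interleaver, and then take the union bound in the spirit of Corollary \ref{corollary1}.

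First, I would re-parametrize the PEP. Theorem \ref{theorem1}, after averaging over the Rayleigh block envelopes, shows that a codeword with block-wise Hamming distance vector $(d_1,\dots,d_L)$ contributes an average PEP bounded by $\prod_{l=1}^L 1/(1+d_l E_s/N_0)$. The product depends on $(d_1,\dots,d_L)$ only through its weight pattern $\mathbf{f}=(f_0,f_1,\dots,f_w)$: exactly $f_v$ factors equal $1/(1+v E_s/N_0)$. The factor corresponding to $v=0$ is trivially $1$, so the PEP bound reduces to
\[
\prod_{v=1}^{w}\left(\frac{1}{v E_s/N_0+1}\right)^{f_v}.
\]

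Second, I would quantify how many codewords of $\mathbb{D}_N^{(i)}$ land on each weight pattern after the uniform interleaver. Conditioning on the original codeword weight $d$, the uniform interleaver deposits the $d$ ones into the $L$ blocks of size $M$ according to a uniform random placement. Hence the probability that a given weight-$d$ codeword is permuted to have pattern $\mathbf{f}$ is exactly $P_d(\mathbf{f})$; this factor counts the number of placements that realize $\mathbf{f}$ (a combinatorial expression involving $\binom{L}{f_0,f_1,\dots,f_w}$ and $\prod_v \binom{M}{v}^{f_v}$) divided by the total number $\binom{LM}{d}$ of placements. Averaging over the interleaver then yields an approximate split polar weight enumerator $\widetilde{A}_N^{(i)}(\mathbf{f}) = A_N^{(i)}(d)\,P_d(\mathbf{f})$ for the pattern $\mathbf{f}$ corresponding to weight $d$, where $A_N^{(i)}(d)$ is the ordinary polar spectrum.

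Third, I would apply the union bound analogously to Corollary \ref{corollary1}, but now summing first over $d$ from $d_{\min}^{(i)}$ to $N$, then over all admissible patterns $\mathbf{f}$ compatible with that $d$. The constraint $\sum_{v=0}^{w} f_v = L$ and $\sum_{v=1}^{w} v f_v = d$ determines the feasible region. Setting $F = L - f_0$ for the number of nonzero blocks, one has $F \geq \lceil d/M \rceil$ (each block holds at most $M$ ones) and $F \leq d$. Enumerating $f_1, f_2,\dots, f_w$ sequentially, each $f_v$ must satisfy $0 \leq f_v \leq F_v$ with $F_v = \min\{F - \sum_{r<v} f_r,\, (d - \sum_{r<v} r f_r)/v\}$, which matches the summation ranges in the statement. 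Combining the three ingredients gives (\ref{equation34}).

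The main obstacle is the clean derivation of $P_d(\mathbf{f})$ under the uniform interleaver and showing that the resulting expression indeed dominates (in expectation over the interleaver) the true split enumerator so that the union bound is valid; this is the content of the random-mapping argument analogous to \cite{RandomMapping}. The PEP and union-bound steps are then routine adaptations of Theorem \ref{theorem1} and Corollary \ref{corollary1}.
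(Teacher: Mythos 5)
Your proposal is correct and follows essentially the same route as the paper's proof: bound the conditional PEP given the pattern $\mathbf{f}$ and the envelopes, average over the Rayleigh envelopes to get $\prod_{v}\bigl(1/(1+vE_s/N_0)\bigr)^{f_v}$, weight each pattern by the uniform-interleaver probability $P_d(\mathbf{f})$ taken from \cite{RandomMapping}, and apply the union bound over $d$ and the admissible patterns with the stated ranges $F_v$. The only cosmetic difference is that you reuse the already-averaged PEP of Theorem \ref{theorem1} and regroup factors by weight class, whereas the paper writes the conditional bound $\prod_{v}\prod_{t}\exp(-\alpha_{v,t}^2 v E_s/N_0)$ first and then integrates; the two are equivalent.
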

\begin{proof}
Due to the uniform interleaving, the error probability of the polarized channel $W_N^{(i)}$ is upper bounded by averaging over all possible fading block patterns and the envelopes, that is,
\begin{equation}\label{equation35}
P\left(W_N^{(i)}\right)\leq \sum_{d=d_{min}^{(i)}}^{N} \mathbb{E}_{\mathbf{f},\bm{\alpha}}\left[{A}_N^{(i)}(d) P_d(\mathbf{f})P(d|\bm{\alpha},\mathbf{f})\right].
\end{equation}

Here, given the weight pattern $\mathbf{f}$ and the envelope vector $\bm{\alpha}$, the conditional PEP can be upper bounded by
\begin{equation}\label{equation36}
  P\left(d\left|\bm{\alpha},\mathbf{f}\right.\right)\leq \prod_{v=1}^{w} \prod_{t=1}^{f_v} \exp\left( -\alpha_{v,t}^2 v\frac{E_s}{N_0} \right),
\end{equation}
where $\alpha_{v,t}$ denotes the envelope related to the $t$-th fading block with the same weight $v$.
The conditional PEP is further taken the expectation over the envelope $\bm{\alpha}$ and bounded by
\begin{equation}\label{equation37}
  \bar{P}\left(d\left| \mathbf{f}\right.\right)=\mathbb{E}_{\bm{\alpha}}\left[P\left(d\left| \bm{\alpha},\mathbf{f}\right.\right)\right]\leq \prod_{v=1}^{w} \left(\frac{1}{1+ v\frac{E_s}{N_0}}\right)^{f_v}.
\end{equation}
According to \cite{RandomMapping}, the probability of a fading block pattern for a specific codeword weight $d$ is written as
\begin{equation}\label{equation38}
\begin{aligned}
P_d(\mathbf{f})&=\frac{\prod\limits_{v=1}^{w}{\left(\begin{array}{c}M\\v\end{array}\right)}^{f_v}}{\left(\begin{array}{c}N\\d\end{array}\right)}\cdot\frac{L!}{\prod\limits_{v=0}^{w}{f_v!}}\\
&=\prod\limits_{v=1}^{w}{\left(\begin{array}{c}M\\v\end{array}\right)}^{f_v}B(L,\mathbf{f},d),
\end{aligned}
\end{equation}
where $B(L,\mathbf{f},d)={\left(\begin{array}{c}L\\f_0,f_1,...,f_w\end{array}\right)}\left/{\left(\begin{array}{c}N\\d\end{array}\right)}\right.$.
Note that the right term of the first equality in (\ref{equation38}) means the number of combinations of $\mathbf{f}$ among the $L$ blocks, which is a multinominal coefficient $\left(\begin{array}{c}L\\f_0,f_1,...,f_w\end{array}\right)$. Correspondingly, the left factor of this equality indicates the probability of deploying $d$ nonzero bits over $L$ blocks whereby the number of blocks with weight $v$ is $f_v$ for all possible values of $v$.
In addition, since the summation order in (\ref{equation34}) is from $f_1$ to $f_w$, we can derive the superscript of each summation as $F_v=\min\left\{F-\sum_{r=1}^{v-1}{f_r},\frac{d-\sum_{r=1}^{v-1}{rf_r}}{v}\right\}$. In fact, other summation order is also possible.
\end{proof}

Compared with the upper bound (\ref{equation12}) in Corollary \ref{corollary1}, we find that the upper bound in (\ref{equation34}) has a similar form. Recall that the calculation of split polar spectrum is very difficult when $L>2$. So we can regard $A_N^{(i)}P_d(\mathbf{f})$ as an approximation of split polar spectrum $A_N^{(i)}(\mathbf{d})$. Since the polar spectrum $A_N^{(i)}$ can be easily calculated by using the enumeration algorithm in \cite{PolarSpectrum_Niu}. With the help of uniform interleaving, this bound is suitable for arbitrary diversity order.

For the random mapping, we can also obtain the BLER upper bound as the following corollary.
\begin{corollary}\label{corollary3}
Given the fixed configuration $(N,K,\mathcal{A})$, for the random mapping, the block error probability of polar code using SC decoding in the block Rayleigh fading channel is bounded by
\begin{equation}\label{equation39}
P_e(N,K,\mathcal{A})\leq \! \sum_{i\in \mathcal{A}}\! \sum_{d \to \mathbf{f}, d=d_{min}^{(i)}}^N \!\!\!\!\!\!\!A_N^{(i)}(d) P_d(\mathbf{f})\!\prod_{v=1}^{w} \left(\frac{1}{1+vE_s/N_0}\right)^{f_v}.
\end{equation}
\end{corollary}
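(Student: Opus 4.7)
The plan is to derive Corollary \ref{corollary3} as a direct consequence of Theorem \ref{theorem6} combined with the standard SC union bound that was already invoked in Corollary \ref{corollary2}. Specifically, under SC decoding, a block error on a fixed configuration $(N,K,\mathcal{A})$ occurs only if at least one of the polarized channels $W_N^{(i)}$ with $i\in\mathcal{A}$ makes a bit decision error, so by the union bound
\begin{equation*}
P_e(N,K,\mathcal{A})\;\leq\;\sum_{i\in\mathcal{A}} P\!\left(W_N^{(i)}\right).
\end{equation*}
This is the analogue of the step used to pass from (\ref{equation2}) to (\ref{equation13}); the only thing that changes under random mapping is the expression for $P(W_N^{(i)})$.

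Next I would substitute the upper bound from Theorem \ref{theorem6} into each summand on the right-hand side. That theorem already supplies the bound
\begin{equation*}
P\!\left(W_N^{(i)}\right)\leq \sum_{d=d_{min}^{(i)}}^{N}\sum_{d\to \mathbf{f}} A_N^{(i)}(d)\,P_d(\mathbf{f})\prod_{v=1}^{w}\left(\frac{1}{1+vE_s/N_0}\right)^{f_v},
\end{equation*}
where the inner summation runs over all fading-block weight patterns $\mathbf{f}=(f_0,f_1,\ldots,f_w)$ compatible with the codeword weight $d$, i.e.\ those satisfying $\sum_v f_v = L$ and $\sum_v v f_v = d$. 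Swapping the two outer summations and relabelling $(d,\mathbf{f})$ as the joint index $d\to\mathbf{f}$ recovers exactly the compact form stated in (\ref{equation39}).

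The only subtle point is checking that the summation ranges match. In Theorem \ref{theorem6} the upper limits $F_v = \min\{F-\sum_{r<v}f_r,(d-\sum_{r<v}rf_r)/v\}$ enforce the combinatorial constraints $\sum_v f_v=L$ and $\sum_v vf_v=d$; the shorthand $d\to \mathbf{f}$ in Corollary \ref{corollary3} is defined to enumerate precisely the same set of patterns, so the two forms are equivalent. No new probabilistic estimate is required, which is why this result is cast as a corollary rather than a theorem; the substantive analytical work (averaging the conditional PEP over $\bm{\alpha}$ and summing over the uniform-interleaver pattern probability $P_d(\mathbf{f})$) has already been carried out in the proof of Theorem \ref{theorem6}. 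Thus the only real "obstacle" is bookkeeping: making sure that the outer index over information positions, the weight index $d\geq d_{min}^{(i)}$, and the pattern index $\mathbf{f}$ are consistently nested so that the final expression is the promised BLER bound.
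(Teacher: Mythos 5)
Your proposal is correct and follows essentially the same route as the paper: the paper obtains Corollary \ref{corollary3} exactly as it obtains Corollary \ref{corollary2}, namely by applying the standard SC union bound $P_e(N,K,\mathcal{A})\leq\sum_{i\in\mathcal{A}}P(W_N^{(i)})$ over the information set and substituting the per-channel bound of Theorem \ref{theorem6}, with the nested sums over $F$ and $f_1,\ldots,f_w$ collapsed into the shorthand $d\to\mathbf{f}$. Your bookkeeping check that the pattern constraints $\sum_v f_v=L$ and $\sum_v vf_v=d$ coincide with the ranges $F_v$ in Theorem \ref{theorem6} is the only nontrivial point, and you handle it correctly.
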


\subsection{Approximation Analysis of Upper Bound}
Given the weight pattern $\mathbf{f}$, for the case of high SNR, the conditional PEP can be further bounded by
\begin{equation}\label{equation40}
  \bar{P}\left(d\left| \mathbf{f}\right.\right)\leq \prod_{v=1}^{w} \left(\frac{1}{v\frac{E_s}{N_0}}\right)^{f_v}
   =\left(\prod\limits_{v=1}^{w}{v^{f_v}}\right)^{-1} \left(\frac{E_s}{N_0}\right)^{-\left(L-f_0\right)}.
\end{equation}

Like the analysis in Section \ref{DesignCriteria}, we find that a diversity order in term of $L-f_0$ can be achieved in Eq. (\ref{equation40}). If the number of zero-weight blocks is zero, i.e., $f_0=0$, we will obtain the full diversity. Similarly, by Lemma \ref{lemma1}, as a product distance, the term $\prod_{v=1}^{w}{v^{f_v}}$ can be approximately maximized when $w=d/L$ and $f_w=L$. Therefore, the design criteria of polar coding in Section \ref{DesignCriteria} are also suitable for the polar code struction under the random mapping.
\begin{remark}
Guided by the full diversity criterion, we require the interleaver in random mapping can uniformly distribute the nonzero bits over $L$ fading blocks so as to achieve the $L$ diversity order. Hence, if the uniform interleaver is carefully designed and satisfies the equipartition condition, the full diversity can be achieved and we have $f_0=0$. Furthermore, since the low-weight terms dominate the upper bound of the error probability in (\ref{equation34}), guided by the product distance criterion, for the random mapping, we can also optimize the product distance corresponding to the minimum Hamming distance $d_{min}^{(i)}$ and obtain a coding advantage. Certainly, the influence of polar spectrum $A_N^{(i)}$ and the probability of fading block pattern $P_d(\mathbf{f})$ should be elaborately considered.
\end{remark}

\section{Polar Coding Based on Random Mapping}
\label{section_VI}
In this section, we consider the construction method of polar codes under the random mapping. Based on the logarithmic version of the upper bound, the constructive metric is also named polarized diversity weight (PDW).

For the random mapping, we can also consider the upper bound of error probability in the case of low-SNR and obtain the following theorem.
\begin{theorem}\label{theorem8}
In the case of low-SNR and random mapping, the upper bound on the error probability of the polarized channel can be approximated by
\begin{equation}
P\left(W_N^{(i)}\right)\lesssim \sum_{d=d_{min}^{(i)}}^{N} \sum_{\mathbf{f}} \prod_{v=1}^{w} { \frac{e^{\left[-f_v\left(1+v \frac{E_s}{N_0}-\ln C_N^{(i)}(f_v)\right)\right]}}{\left[1+v \frac{E_s}{N_0}-\ln C_N^{(i)}(f_v)\right]^{f_v}} },
\end{equation}
where $C_N^{(i)}(f_v)=\left[A_N^{(i)}(d)B(L,\mathbf{f},d)\right]^{\frac{1}{wf_v}}\left(\begin{array}{c}M\\v\end{array}\right)$.
\end{theorem}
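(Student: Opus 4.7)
The plan is to mimic the low-SNR derivation of Theorem \ref{theorem3}, but applied to the random-mapping bound of Theorem \ref{theorem6} instead of the block-mapping bound of Corollary \ref{corollary1}. First, I would express each Rayleigh-averaged single-block factor in its integral form as in the proof of Theorem \ref{theorem1},
\begin{equation}
\frac{1}{v E_s/N_0+1}=\int_{0}^{\infty} e^{-\gamma_{v,t} v}\,g(\gamma_{v,t})\,d\gamma_{v,t},
\end{equation}
so that $(vE_s/N_0+1)^{-f_v}$ becomes a product of $f_v$ such integrals over independent envelope realizations $\gamma_{v,t}$, one per fading block of weight $v$.

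The key algebraic step is to absorb the enumerator $A_N^{(i)}(d)P_d(\mathbf{f})$ into the product over $v$. Using the factorization $P_d(\mathbf{f})=\prod_{v=1}^{w}\binom{M}{v}^{f_v}B(L,\mathbf{f},d)$ from Eq. (\ref{equation38}) and splitting $A_N^{(i)}(d)B(L,\mathbf{f},d)$ evenly across the $w$ groups and across the $f_v$ blocks within each group, I would verify the identity
\begin{equation}
A_N^{(i)}(d)\,P_d(\mathbf{f})=\prod_{v=1}^{w}\bigl[C_N^{(i)}(f_v)\bigr]^{f_v},
\end{equation}
with $C_N^{(i)}(f_v)=\bigl[A_N^{(i)}(d)B(L,\mathbf{f},d)\bigr]^{1/(wf_v)}\binom{M}{v}$ as defined in the statement. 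This lets every per-block integrand receive a matched $\ln C_N^{(i)}(f_v)$ contribution in the exponent, transforming one block factor into
\begin{equation}
\int_{0}^{\infty}\frac{1}{\gamma}\,\exp\!\Bigl[\ln C_N^{(i)}(f_v)-\gamma_{v,t}\bigl(v+\tfrac{1}{\gamma}\bigr)\Bigr]d\gamma_{v,t}.
\end{equation}

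Next, exactly as in Theorem \ref{theorem3}, I would split this integral at $\gamma_{v,t}=\gamma$ into $I_1+I_2$, drop $I_1\approx 0$ in the low-SNR regime ($\gamma\approx 0$), and on $[\gamma,\infty)$ use the inequality $\gamma_{v,t}/\gamma\ge 1$ to bound $\ln C_N^{(i)}(f_v)\le \ln C_N^{(i)}(f_v)\cdot \gamma_{v,t}/\gamma$. The resulting integrand becomes a pure exponential in $\gamma_{v,t}$ with rate $\bigl(vE_s/N_0+1-\ln C_N^{(i)}(f_v)\bigr)/\gamma$, which integrates in closed form to
\begin{equation}
\frac{\exp\!\bigl[-\bigl(vE_s/N_0+1-\ln C_N^{(i)}(f_v)\bigr)\bigr]}{vE_s/N_0+1-\ln C_N^{(i)}(f_v)},
\end{equation}
provided the denominator is positive. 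Raising to the $f_v$-th power (one for each of the $f_v$ blocks of weight $v$) and multiplying over $v=1,\ldots,w$ and summing over $d$ and $\mathbf{f}$ yields exactly the claimed bound.

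The main obstacle will be the bookkeeping in step two: choosing the right equipartition of $A_N^{(i)}(d)B(L,\mathbf{f},d)$ across the $wf_v$ factors so that the definition of $C_N^{(i)}(f_v)$ matches the statement, and verifying that each low-SNR sub-integral retains a positive rate $vE_s/N_0+1-\ln C_N^{(i)}(f_v)>0$ so that the $\gamma_{v,t}/\gamma\ge 1$ bound is not vacuous. Once that factorization is fixed, the remainder is a direct parallel of the proof of Theorem \ref{theorem3}.
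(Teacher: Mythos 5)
Your proposal follows essentially the same route as the paper's proof: the factorization $A_N^{(i)}(d)P_d(\mathbf{f})=\prod_{v}\bigl[C_N^{(i)}(f_v)\bigr]^{f_v}$ you verify is exactly the paper's step of ``decomposing the term into the inner of the product,'' and the subsequent split of each envelope integral at $\gamma_{v,t}=\gamma$, dropping $I_1$, and applying $\gamma_{v,t}/\gamma\ge 1$ mirrors the paper line for line. Your explicit attention to the equipartition bookkeeping and the positivity condition $vE_s/N_0+1-\ln C_N^{(i)}(f_v)>0$ matches the remark the paper makes immediately after the theorem.
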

\begin{proof}
Let $\gamma_{v,t}=\alpha_{v,t}\frac{E_s}{N_0}$ denote the fading block SNR corresponding to the $t$-th block with the weight $v$. After decomposing the term $A_N^{(i)}P_d(\mathbf{f})$ into the inner of the product, we can rewrite the upper bound of the error probability of polarized channel as
\begin{equation}\label{equation44}
\begin{aligned}
&P\left(W_N^{(i)}\right)\leq \sum_{d} \sum_{\mathbf{f}} \prod_{v=1}^{w} \prod_{t=1}^{f_v} C_N^{(i)}(f_v) \int\nolimits_0^{\infty} e^{-\gamma_{v,t} v }g(\gamma_{v,t})d\gamma_{v,t}\\
& =\sum_{d} \sum_{\mathbf{f}} \prod_{v=1}^{w} \prod_{t=1}^{f_v}\int\nolimits_0^{\infty} e^{\ln C_N^{(i)}(f_{v,t})-\gamma_{v,t} v }\frac{1}{\gamma}e^{-\gamma_{v,t}/\gamma}d\gamma_{v,t}.
\end{aligned}
\end{equation}
The inner integration in (\ref{equation44}) can also be decomposed into two parts as follows,
\begin{equation}
\begin{aligned}
&\int\nolimits_0^{\infty} \frac{1}{\gamma}e^{\ln C_N^{(i)}(f_v)-\gamma_{v,t} \left(v+\frac{1}{\gamma}\right) }d\gamma_{v,t}\\
&=\!\int\nolimits_0^{\gamma} \frac{1}{\gamma}e^{\ln C_N^{(i)}(f_v)-\gamma_{v,t} \left(v+\frac{1}{\gamma}\right) }d\gamma_{v,t}\!\\
&+\!\int\nolimits_{\gamma}^{\infty} \frac{1}{\gamma}e^{\ln C_N^{(i)}(f_v)-\gamma_{v,t} \left(v+\frac{1}{\gamma}\right) }d\gamma_{v,t}\!=I_1+I_2.
\end{aligned}
\end{equation}
For the case of low-SNR, i.e., $\gamma\approx 0$, the first integration $I_1$ tends to zero. Thus, the second integration $I_2$ is concerned and further bounded by
\begin{equation}\label{equation46}
\begin{aligned}
I_2 &\leq \int\nolimits_{\gamma}^{\infty} \frac{1}{\gamma}e^{\ln C_N^{(i)}(f_v)\frac{\gamma_{v,t}}{\gamma}-\gamma_{v,t} \left(v+\frac{1}{\gamma}\right) }d\gamma_{v,t}\\
&=\frac{\exp{\left(-\left(1+\gamma v-{\ln C_N^{(i)}(f_v)}\right)\right)}}{1+\gamma v-{\ln C_N^{(i)}}(f_v)}.
\end{aligned}
\end{equation}
In the derivation of (\ref{equation46}), we use the inequality $\frac{\gamma_{v,t}}{\gamma}\geq 1$.
\end{proof}

Similar to Theorem \ref{theorem3}, the SNR in Theorem \ref{theorem8} should satisfy the condition $ \forall v, v E_s/N_0+1-{\ln C_N^{(i)}}(f_v)>0$. In the case of low SNR, by using Theorem \ref{theorem8}, the logarithmic version of the upper bound can be written as
\begin{equation}
\begin{aligned}
&\ln \left\{\sum_{d=d_{min}^{(i)}}^{N} \sum_{\mathbf{f}} \prod_{v=1}^{w} {   \frac{\exp\left[-f_v\left(1+v \frac{E_s}{N_0}-\ln C_N^{(i)}(f_v)\right)\right]}{\left[1+v \frac{E_s}{N_0}-\ln C_N^{(i)}(f_v)\right]^{f_v}}  }\right\}\\
&\begin{aligned}\propto \max_{\mathbf{f}}
     &\left\{ \sum_{v=1}^{w}f_v\ln {C_N^{(i)}(f_v)}-d_{min}^{(i)}\frac{E_s}{N_0} \right.\\
     &\left.-\sum_{v=1}^{w}f_v\ln \left[{1+v \frac{E_s}{N_0}-\ln C_N^{(i)}(f_v)}\right]\right\}.
    \end{aligned}
\end{aligned}
\end{equation}
Similarly, the approximation $\ln\left( \sum\limits_{k} e^{a_k} \right)\approx \max\limits_{k}\{a_k\}$ is used. Given the assumption of full diversity, two equalities $\sum_{v=1}^{w}v f_v=d$ and $\sum_{v=1}^{w} f_v=L$ are used to simplify the metric.
When $v \frac{E_s}{N_0}-\ln C_N^{(i)}(f_v)$ is sufficiently small, by using the approximation $\ln(1+x)\approx x$, we obtain the PDW as below.

\begin{metric}\label{theorem9}
Given the code length $N$, for the random mapping, the polarized diversity weight (PDW) can be given by,
\begin{equation}\label{equation48}
\begin{aligned}
PDW_N^{(i)}=\max_{\mathbf{f},\sum\limits_{v=1}^{w}{vf_v}=d_{min}^{(i)}} &\left[L_N^{(i)}\left(d_{min}^{(i)}\right)+\ln B\left(L,\mathbf{f},d_{min}^{(i)}\right)\right.\\
&\left.+\sum\limits_{v=1}^{w} f_v\ln{\left(\begin{array}{c}M\\v\end{array}\right)}-d_{min}^{(i)} \frac{E_s}{N_0}\right].
\end{aligned}
\end{equation}
\end{metric}

\begin{remark}
Compared with the construction under the block mapping, the polar codes constructed under the random mapping may achieve the same diversity gain since the carefully designed interleaver can uniformly distribute the nonzero bits over all the fading blocks. In addition, only simple polar spectrum rather than complex split polar spectrum is needed for the calculation of the construction metrics under the random mapping. Therefore, such PDW metric is attractive for the practical implementation of polar coding.
\end{remark}

\section{Numerical Analysis and Simulation Results}
\label{section_VII}
In this section, we will provide the numerical and simulation results of polar codes based on the split polar spectrum under the block Rayleigh fading channel with $L$ fading blocks. First, we present the BLER simulation results in the case of $L\leq 2$, where both the block mapping and random mapping are considered. Then, the BLER upper bounds and simulation results in the case of $L>2$ are analyzed and compared, where only the random mapping is adopted.
\subsection{Results under $L\leq2$}
In this part, under the condition of $L\leq2$, we compare the BLER simulation performances of polar codes generated by the proposed PDW construction and the traditional methods, which include GA algorithm and the one based on Bhattacharyya parameter according to \cite{BlockFading_Bravo-Santos}. Note that the construction metric PDW under the block mapping is equivalent to that under the random mapping when $L=1$. Otherwise, the constructions under the two mappings are marked by ``block'' and ``random'', respectively. While the GA algorithm is executed on each symbol SNR.

\begin{figure*}[htbp]
\setlength{\abovecaptionskip}{0.cm}
\setlength{\belowcaptionskip}{-0.cm}
  \centering{\includegraphics[scale=0.9]{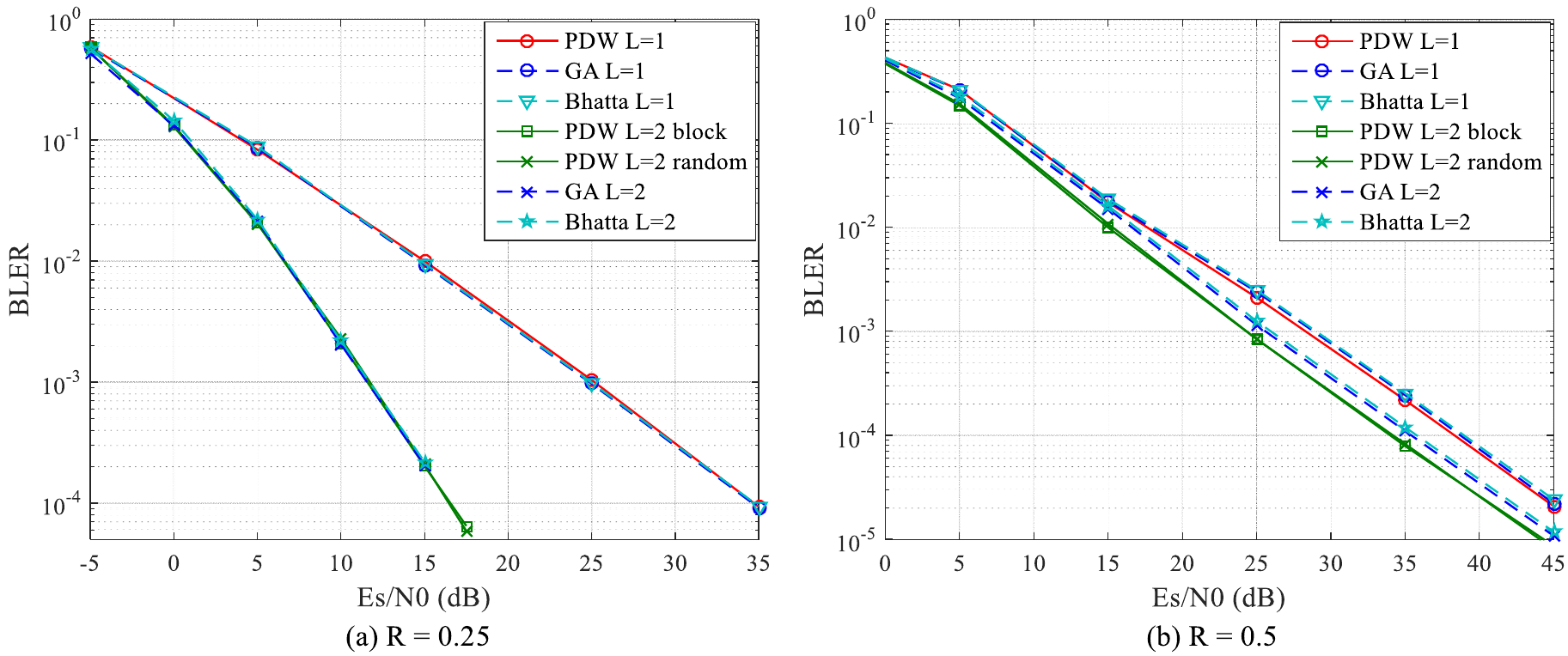}}
  \caption{The BLER performances comparison between the polar codes constructed based on GA, Bhattacharyya parameter and PDW under block Rayleigh fading channel, where $N = 256$, $R=\{0.25, 0.5\}$ and SC decoding is used.}\label{N256_K64_128}
\end{figure*}

Given the code length $N=256$, Fig. \ref{N256_K64_128} provides the BLER performances comparison among the various constructions under SC decoding. For the PDW construction, the symbol SNR in (\ref{equation24})/(\ref{equation48}) is respectively set to $0$ and $3$ dB for the code rate $R=0.25$ and $R=0.5$. It can be observed from Fig. \ref{N256_K64_128}(a) that the polar codes constructed by PDW can achieve the same performance of those constructed by Bhattacharyya parameter or GA algorithm which involves complex iterative calculation when the low code rate $R=0.25$ is considered. Moreover, in the case of $L=2$, the PDW under the random mapping behaves almost identically as that under the block mapping, which implies the approximate split polar spectrum derived by the random mapping is an effective alternative for the exact one under the block mapping. Meanwhile, the performances under $L=2$ exhibit obvious diversity gain comparing to that under $L=1$, which is consistent with the theoretical analysis of diversity order. Similar observations can be obtained in the Fig. \ref{N256_K64_128}(b) for the medium code rate $R=0.5$. However, the diversity advantage of $L=2$ against $L=1$ shrinks due to the fact that the diversity order depends both on the number of fading blocks $L$ as well as on the code rate $R$ \cite{CodedDiversity}.

\subsection{Results under $L>2$}
In this part, under $L=4$, we compare the BLER upper bounds and simulation results of polar codes generated by the proposed PDW construction and the traditional methods. For the PDW, only the random mapping is adopted to avoid the complex calculation of exact split polar spectrum. Meanwhile, the BLERs under $L\leq2$ are also provided for comparison.

Fig. \ref{N256_K64_upper_bound} provides the upper bound of BLER as well as the BLER performances of polar codes constructed by PDW with $N=256$ and $R=0.25$ under SC decoding. The upper bounds of BLER marked by the dash line are obtained by (\ref{equation13}). As shown in Fig. \ref{N256_K64_upper_bound}, all the upper bounds of BLER dramatically decrease with the increase of symbol SNR, and the slopes of curves are identical with those of the corresponding simulation results for PDW construction. Although the proposed upper bound is quite loose, it can be calculated with a linear complexity and used to deduce construction metrics with explicit expressions.

\begin{figure}[htbp]
\setlength{\abovecaptionskip}{0.cm}
\setlength{\belowcaptionskip}{-0.cm}
  \centering{\includegraphics[scale=0.6]{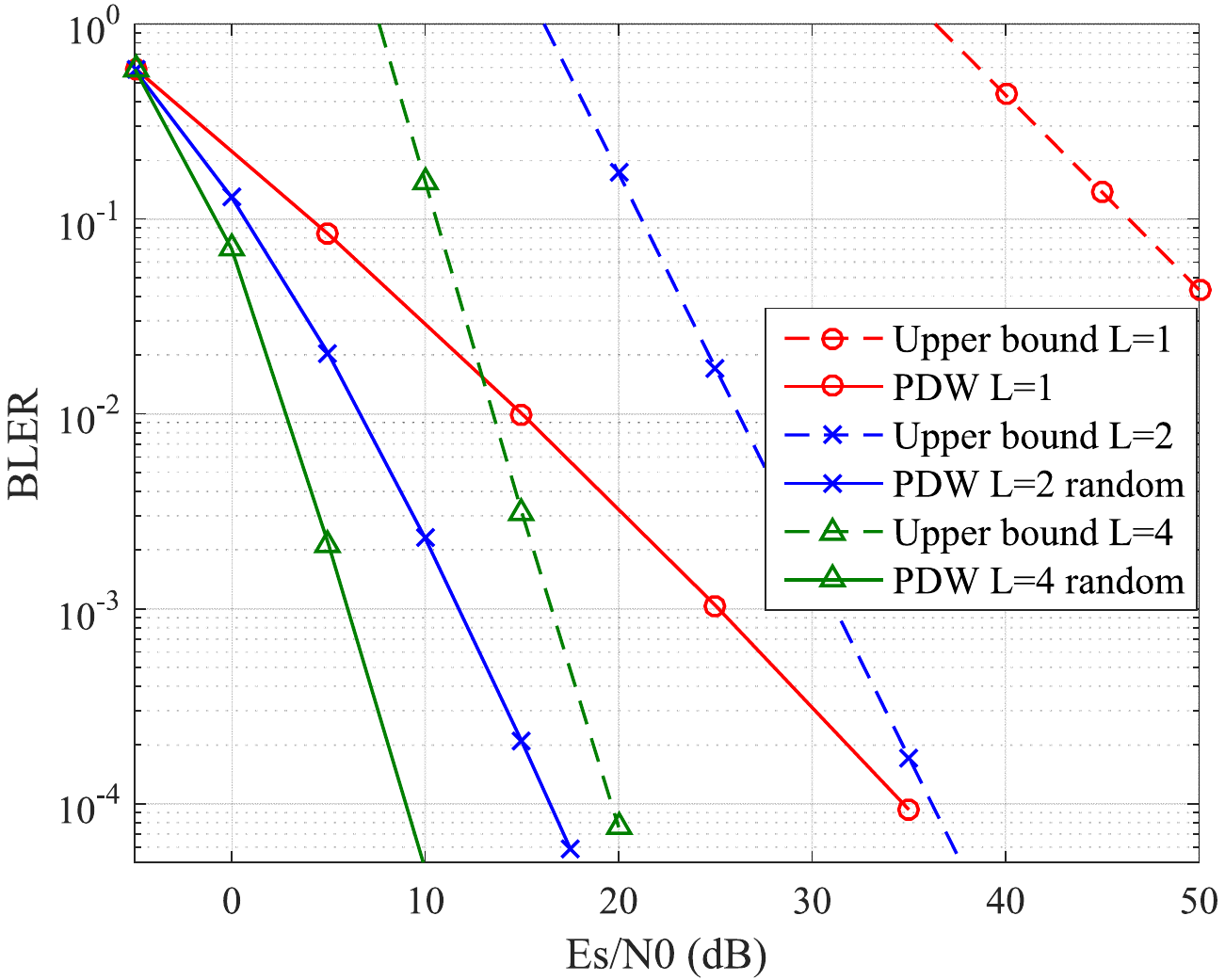}}
  \caption{The BLER performance and upper bound of the polar codes constructed based on PDW under block Rayleigh fading channel, where $N = 256$, $R=0.25$ and SC decoding is used.}\label{N256_K64_upper_bound}
\end{figure}

For the code length $N=1024$ and code rate $R=0.25$, the BLER performance under SC decoding is shown in Fig. \ref{N1024_K256_L=4}(a). The symbol SNR in (\ref{equation48}) is set to 0 dB for the PDW construction. It can be observed that the polar codes constructed by PDW perform superior to those constructed by GA algorithm and Bhattacharyya parameter, especially under the condition of $L=2$ and $L=4$. Specifically, for $L=2$, comparing to the GA method or the one based on Bhattacharyya parameter, both diversity gain and coding gain can be obtained by PDW construction, while for $L=4$, the PDW construction exhibits notable coding gain against the two methods. Actually, the PDW construction benefits from its design criteria, i.e, the full diversity criterion and product distance criterion. Furthermore, the BLER performance for PDW construction under SCL decoding with list size 16 is depicted in Fig. \ref{N1024_K256_L=4}(b), which indicates that the BLER performance under the block fading channels is dominated by the number of fading blocks $L$ and only relatively marginal improvement can be achieved by employing the SCL decoding.

\begin{figure*}[htbp]
\setlength{\abovecaptionskip}{0.cm}
\setlength{\belowcaptionskip}{-0.cm}
  \centering{\includegraphics[scale=0.9]{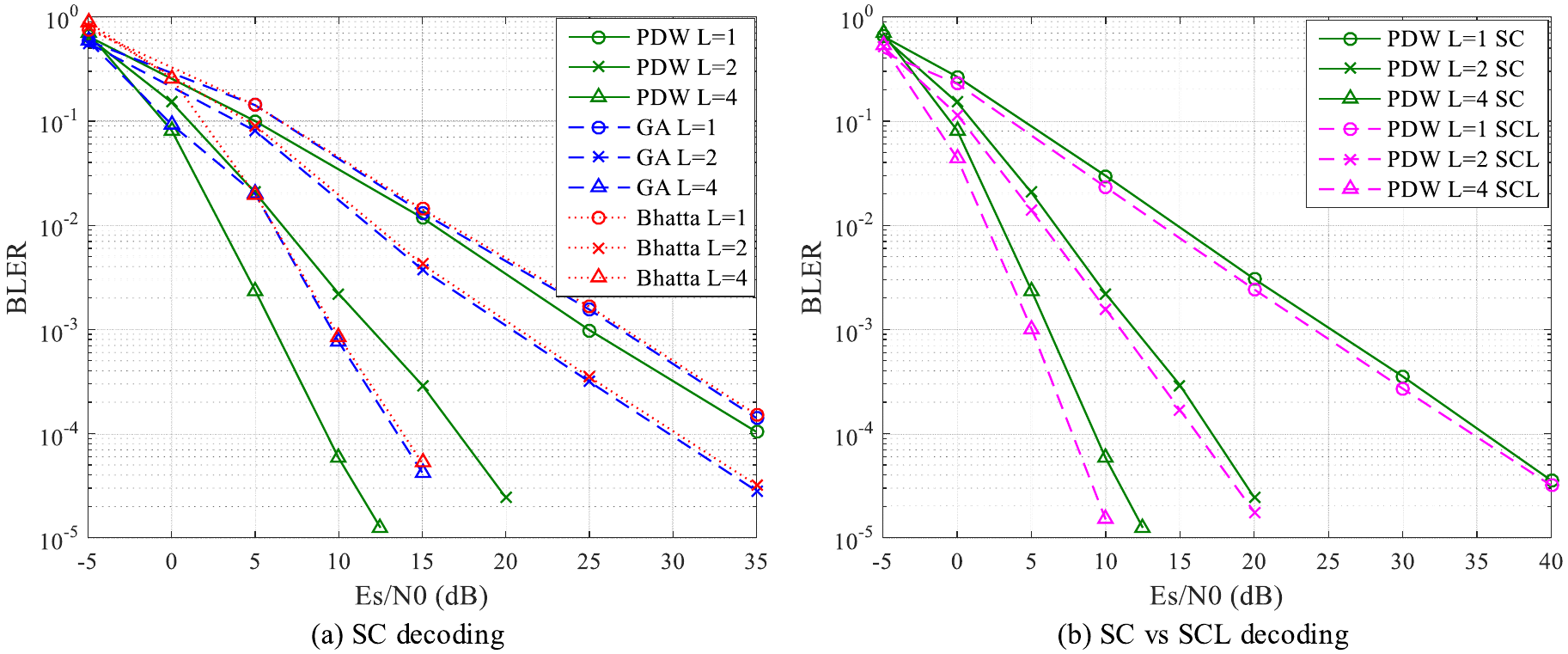}}
  \caption{The BLER performances comparison among the polar codes constructed based on GA, Bhattacharyya parameter and PDW under block Rayleigh fading channel, where $N = 1024$, $R=0.25$ and SC decoding or SCL decoding with list size 16 is used.}\label{N1024_K256_L=4}
\end{figure*}

Finally, in Fig. \ref{outage}, the BLER performances of polar codes constructed by PDW are compared with the capacity outage under different code rate $R$, where the capacity outages are similarly computed numerically by using Eq. (30) in \cite{On_coding} and are marked by ``$P_{out}$''. For the two code rate $R=0.25$ and $0.6$, the symbol SNRs in (\ref{equation48}) are respectively set to $0$ and $3$ dB. As shown in Fig. \ref{outage}, the proposed PDW construction can achieve the same diversity gain as the capacity outage in both case of $L=2$ and $L=4$. Therefore, the proposed PDW is an attractive construction metric for practical application which can guarantee maximum diversity with a linear complexity.

\begin{figure}[htbp]
\setlength{\abovecaptionskip}{0.cm}
\setlength{\belowcaptionskip}{-0.cm}
  \centering{\includegraphics[scale=0.65]{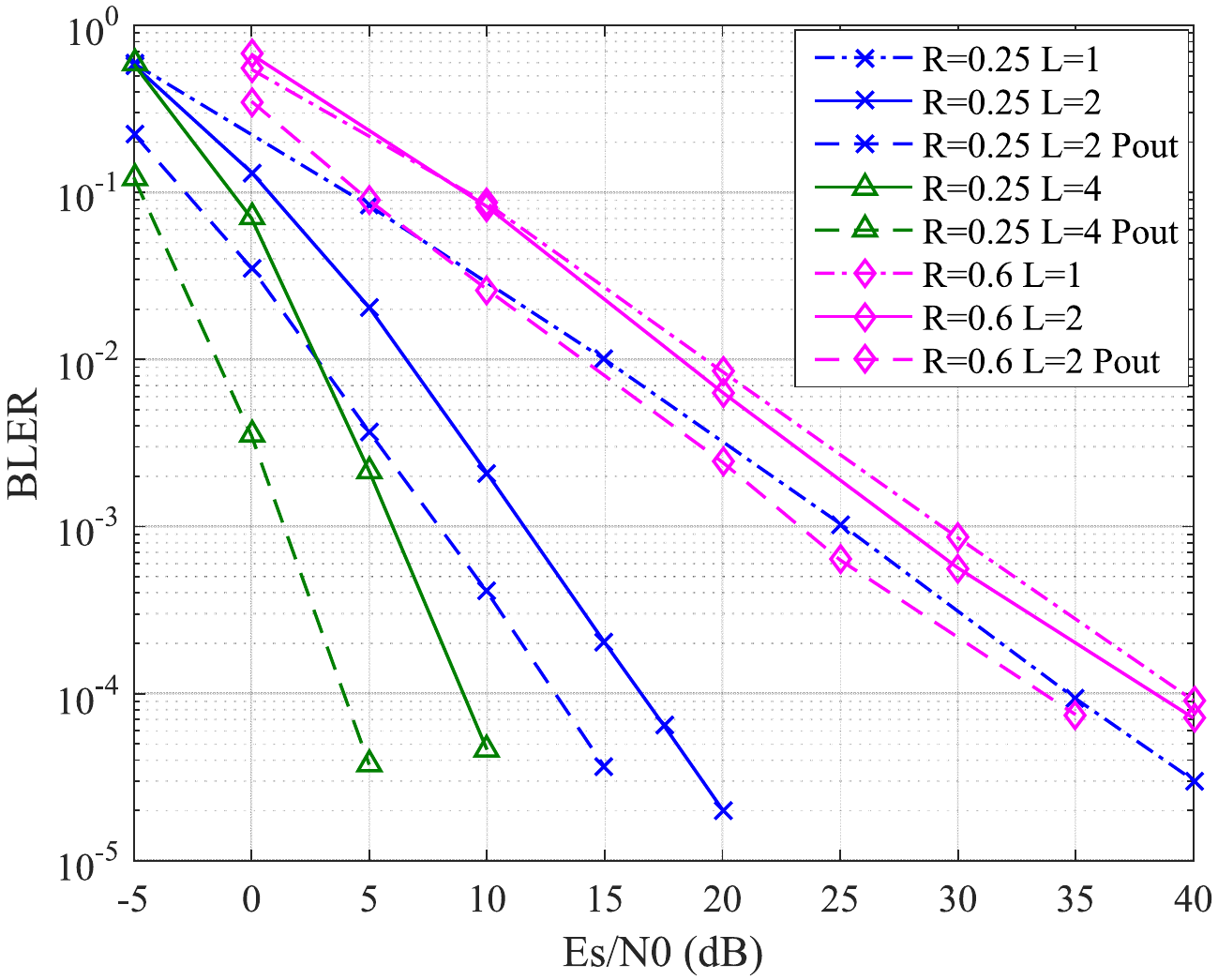}}
  \caption{The comparison between the BLER performances and capacity outage under block Rayleigh fading channel, where $N = 256$ and SC decoding is used for PDW construction.}\label{outage}
\end{figure}

\section{Conclusions}
\label{section_VIII}
In this paper, we establish a systematic framework to analyze the theoretical performances of polar codes under the block Rayleigh fading channels by introducing a new concept named split polar spectrum. In the case of block mapping and random mapping, based on the split polar spectrum, we derive the upper bound on the error probability of the polarized channel as well as that on the BLER of SC decoding. In addition, we propose an enumeration algorithm embedded the solution of general MacWilliams identities to calculate the exact split polar spectrum for the special case $L=2$ under the block mapping. While for arbitrary diversity order, the approximate split polar spectrum can be derived under the random mapping by means of uniform interleaving. Finally, we propose two design criteria, i.e., the full diversity criterion and the product distance criterion, to construct polar codes over the block Rayleigh fading channels, whereby an explicit and analytical construction metric named PDW is proposed to construct polar codes with a linear complexity. Simulation results show that the polar codes constructed by the PDW metric can achieve similar or superior performance to those constructed by GA algorithm and Bhattacharyya parameter under SC decoding.



\end{document}